\newtheorem{theorem}{Theorem}[section]
\newtheorem{definition}{Definition}[section]
\newtheorem{corollary}{Corollary}[theorem]
\newtheorem{algorithm}{Algorithm}[section]
\providecommand{\n}[1]{{\boldsymbol{#1}}}
\def\diag{\mathop{\rm diag}\nolimits}
\newcommand{\e}{\mbox{E}\ell} % elíptica
\newcommand{\mb}{\mbox{BCE}\ell} % Box--Cox elíptica
\newcommand{\mbn}{\mbox{BCN}} % Box--Cox normal
\newcommand{\mbt}{{\mbox{BC}}{\it{t}}} % Box--Cox t
\newcommand{\te}{\mbox{TE}\ell} % truncada elíptica
\newcommand{\bcs}{\mbox{BCS}} % Box--Cox simétrica
\newcommand{\lel}{\mbox{LE}\ell} % log-elíptica
\newcommand{\mln}{\mbox{LN}} % log-normal
\newcommand{\mlt}{\mbox{L}{\it{t}}} % log-normal
\journal{}
\begin{document}

\begin{frontmatter}

%% Title, authors and addresses

%% use the tnoteref command within \title for footnotes;
%% use the tnotetext command for the associated footnote;
%% use the fnref command within \author or \address for footnotes;
%% use the fntext command for the associated footnote;
%% use the corref command within \author for corresponding author footnotes;
%% use the cortext command for the associated footnote;
%% use the ead command for the email address,
%% and the form \ead[url] for the home page:
%%
%% \title{Title\tnoteref{label1}}
%% \tnotetext[label1]{}
%% \author{Name\corref{cor1}\fnref{label2}}
%% \ead{email address}
%% \ead[url]{home page}
%% \fntext[label2]{}
%% \cortext[cor1]{}
%% \address{Address\fnref{label3}}
%% \fntext[label3]{}

\title{Box--Cox elliptical distributions with application}

%% use optional labels to link authors explicitly to addresses:
%% \author[label1,label2]{<author name>}
%% \address[label1]{<address>}
%% \address[label2]{<address>}

% \author{Silvia L. P. Ferrari}
% \author{Raúl Alejandro Morán-Vásquez}
% \address{Institute of Mathematics and Statistics, University of São Paulo, Brazil}

\author[moran]{Raúl Alejandro Morán-Vásquez}
\author[ferrari]{Silvia L. P. Ferrari}
\address[moran]{Institute of Mathematics, University of Antioquia, Medellín, Colombia}
\address[ferrari]{Department of Statistics, University of São Paulo, São Paulo, Brazil}

\begin{abstract}
We propose and study the class of Box--Cox elliptical distributions. It provides alternative distributions for modeling 
multivariate positive, marginally skewed and possibly heavy-tailed data. This new class of distributions has as a 
special case the class of log-elliptical distributions, and reduces to the Box--Cox symmetric class of distributions in the 
univariate setting. The parameters are interpretable in terms of quantiles and relative dispersions of the marginal 
distributions and of associations between pairs of variables. The relation between the scale parameters and quantiles makes the
Box--Cox elliptical distributions attractive for regression modeling purposes. Applications to data on vitamin intake are presented and discussed. 
\end{abstract}

\begin{keyword}
Box--Cox symmetric distributions \sep Box--Cox transformation \sep Elliptical distribution \sep Gibbs sampling
\sep Truncated distribution.
%% keywords here, in the form: keyword \sep keyword

%% MSC codes here, in the form: \MSC code \sep code
%% or \MSC[2008] code \sep code (2000 is the default)

\end{keyword}

\end{frontmatter}

%%
%% Start line numbering here if you want
%%
%\linenumbers

%% main text
\section{Introduction}
\label{S:1}

Multivariate positive data are frequently found in empirical studies. The statistical analysis of such data often relies 
on the multivariate normal distribution assumptions, ignoring characteristics of the data, namely the positive support and
possible skewness and presence of outlying observations. Improvements for accommodating outliers may be achieved by replacing 
the multivariate normal distribution by a heavy-tailed distribution in the elliptical class of distributions, such as the 
multivariate $t$ distribution (Lange \textit{et al}. \cite{LANGE1}). Futher improvement may be achieved by the use of 
log-skew-elliptical distributions (Marchenko and Genton \cite{MAR1}), which are multivariate distributions with support in 
$\mathbb{R}_{+}^p$ and accomodate heavy-tailed distributions. An alternative methodology for modeling multivariate positive 
data uses a Box--Cox transformation in each component of the vector of observations. In this approach one assumes that 
the vector of transformed observations follows a multivariate normal or an elliptical distribution 
(Quiroz \textit{et al}. \cite{QUIROZ}). This assumption implies a theoretical shortcoming because the support of the 
transformed vector of observations is not necessarily $\mathbb{R}^p$. Moreover, in this approach the model parameters are 
interpretable only in terms of characteristics of the transformed observations (not the original variables of interest). 
In the univariate case, Ferrari and Fumes \cite{FUMES1} overcome these shortcomings by proposing the class of Box--Cox 
symmetric distributions. This class includes several alternative distributions, such as the Box--Cox Cole--Green 
(Stasinopoulos \textit{et al}. \cite{STAS1}), Box--Cox $t$ (Rigby and Stasinopoulos \cite{RIG-STAS3}), Box--Cox power 
exponential (Rigby and Stasinopoulos \cite{RIG-STAS1}, Voudouris \textit{et al}. \cite{VOU}) distributions, and a new 
distribution, the Box--Cox slash distribution, for modeling univariate positive, skewed, possibly heavy-tailed data.

In the present paper, we focus on the problem of constructing a class of multivariate distributions with support in 
$\mathbb{R}_{+}^p$ in such a way that the marginal distributions have properties similar to those of the Box--Cox 
symmetric distributions, the parameters are interpretable and association among variables is controlled by 
association parameters. We name the proposed class of distributions as the Box--Cox elliptical class of distributions. It
has the log-elliptical class of distributions (Fang \textit{et al}. \cite{FANG1}) as a special subclass and reduces to the 
Box--Cox symmetric class of distributions in the univariate setting. The construction of the new class is performed through 
an extension of the Box--Cox transformation and involves another new class of distributions defined in  this paper, the class of 
truncated elliptical distributions. The parameters of the Box--Cox elliptical distributions are interpretable as characteristics 
of the original variables (not the transformed variables). Some parameters are related to quantiles of the marginal distributions, 
which makes the Box--Cox elliptical distributions attractive for regression modeling purposes. Several properties of the proposed distributions are derived. In particular, some properties of the log-elliptical distributions that are not available in the 
literature are direct consequences of properties of Box--Cox elliptical distributions stated in this paper. The flexibility 
of the proposed distributions for modeling multivariate positive, asymmetric data with or without the presence of outlying 
observations is illustrated through an analysis of real data on vitamin intake by older people.

The paper is organized as follows. In Section \ref{S:2} we define the truncated elliptical distributions and present some 
properties. In Section \ref{S:3} we define the family of the extended Box--Cox transformations, we use it to define the class 
of Box--Cox elliptical distributions, and we state several properties. In Section \ref{S:4} we give interpretation for the 
parameters and show the relation between some parameters and quantiles of the marginal distributions. In Section \ref{S:5} 
we focus on maximum likelihood estimation and present simulation studies. In Section \ref{S:6} we present applications to 
real data. Finally, Section \ref{S:7} closes the paper with concluding remarks. Technical proofs are presented in the Appendix.

\section{The class of the truncated elliptical distributions}
\label{S:2}

In this section, we define a new class of distributions named the class of the truncated elliptical distributions. 
Its is needed for the definition and study of the class of the Box--Cox elliptical distributions, which is the focus of this paper. 

We denote vectors and their components with lowercase Greek letters in bold and normal fonts, respectively. For instance, 
if $\n{\xi}\in{\mathbb{R}}^p$, then $\n{\xi}=(\xi_1,\ldots,\xi_p)'$. Additionally, 
$\n{\xi}_{-k}\in\mathbb{R}^{p-1}$, $k=1,\ldots,p$, is the sub-vector obtained from $\n{\xi}$ by excluding its $k$-th component. 
Similar notations are used for random vectors, but we use capital Roman letters. Matrices are denoted by capital Greek letters 
in boldface and their entries in lowercase normal font Greek letters. For example, if $\n{\Delta}(p\times q)$ is a matrix 
with components in ${\mathbb{R}}$, then $\n{\Delta}=(\delta_{jk})_{p\times q}$. If $\n{\Delta}$ is a symmetric matrix, the 
notation $\n{\Delta}>0$ means that $\n{\Delta}$ is positive definite. If $\n{\Delta}(p\times p)>0$, then 
$\n{\Delta}_{-k,k}\in\mathbb{R}^{p-1}$ is the sub-vector obtained by deleting the $k$-th component of the $k$-th column of 
$\n{\Delta}$; $\n{\Delta}_{k,-k}=\n{\Delta}_{-k,k}'$; and $\n{\Delta}_{-k,-k}>0$ is the sub-matrix obtained by excluding the 
$k$-th row and the $k$-th column of $\n{\Delta}$.

The elliptical distributions have been extensively studied in the statistical literature and applied in different fields; 
see Fang \textit{et al}. \cite{FANG1}, Gupta \textit{et al}. \cite{GUP1} and references therein. From now on, whenever we say that 
a random vector has an elliptical distribution we assume that its probability density function (PDF) exists. 

\begin{definition}\label{elip}
The random vector $\n{X}\in{\mathbb{R}}^p$ has an elliptical distribution with location vector $\n{\mu}\in{\mathbb{R}}^p$ and
dispersion matrix $\n{\Sigma}(p\times p)>0$, if its PDF is
\begin{equation}\label{dens-eliptica1}
f_{\n{X}}(\n{x})=c_p\det(\n{\Sigma})^{-1/2}g((\n{x}-\n{\mu})'\n{\Sigma}^{-1}(\n{x}-\n{\mu})),\quad\n{x}\in\mathbb{R}^p.
\end{equation}
The function $g$, called density generating function (DGF), is such that $g(u)\geq0$, for all $u\geq0$, and 
$\int_{0}^{\infty}r^{p-1}g(r^2)\,\textrm{d}r<\infty$. The normalizing constant $c_p$ is 
$$c_p = \frac{\Gamma(p/2)}{2\pi^{p/2}} \left(\int_{0}^{\infty}r^{p-1}g(r^2)\,\textrm{d}r\right)^{-1}.$$
We write $\n{X}\sim\e_p(\n{\mu},\n{\Sigma};g)$. 
\end{definition}

The univariate case of Definition \ref{elip} corresponds to a random variable $X$ having a symmetric distribution with location 
parameter $\mu\in\mathbb{R}$, dispersion parameter $\sigma^2>0$ and \,DGF $g$, and we write $X\sim\e_1(\mu,\sigma^2;g)$. 
A detailed study about elliptical distributions can be found in Fang \textit{et al}. \cite{FANG1}.

\begin{definition}\label{defeliptrunc}
Let $B\subseteq{\mathbb{R}}^p$ be a measurable set. The random vector $\n{W}\in B$ has a truncated elliptical distribution with 
support $B$ and parameters $\n{\mu}\in\mathbb{R}^p$ and $\n{\Sigma}(p\times p)>0$, DGF $g$, and we write 
$\n{W}\sim\te_p(\n{\mu},\n{\Sigma};B;g)$, if its PDF is
\begin{equation}\label{eliptrun}
f_{\n{W}}(\n{w})=\frac{g((\n{w}-\n{\mu})'\n{\Sigma}^{-1}(\n{w}-\n{\mu}))}{\int_{B}g((\n{w}-\n{\mu})'\n{\Sigma}^{-1}(\n{w}-\n{\mu}))\,
{\rm{d}}\n{w}},\quad \n{w}\in B,
\end{equation}
where $g$ is such that $g(u)\geq0$, for all $u\geq0$, and $\int_{0}^{\infty}t^{p-1}g(t^2)\,\textrm{d}t<\infty$.
\end{definition}

If $B=\mathbb{R}^p$ in (\ref{eliptrun}), we arrive at PDF (\ref{dens-eliptica1}). 

The univariate case of Definition \ref{defeliptrunc} corresponds to a random variable, say $W$, with a truncated symmetric 
distribution with support $B\subseteq\mathbb{R}$, parameters $\mu\in\mathbb{R}$ \, and $\sigma^2>0$, DGF $g$, and we write 
$W\sim\te_1(\mu,\sigma^2;B;g)$. 

Each member of the class of the truncated elliptical distributions is characterized by the DGF $g$. Two notable special cases are the 
multivariate truncated normal and truncated $t$ distributions, which correspond to the DGF $g(u)\propto\exp(-u/2)$ and 
$g(u)\propto(1+u/\tau)^{-(\tau+p)/2}$, with $\tau>0$, respectively. Other special cases include the following multivariate distributions: 
truncated power exponential ($g(u)\propto\exp(-u^{\beta}/2)$, $\beta>0$), 
truncated slash ($g(u)\propto\int_{0}^{1}t^{p+q-1}\exp(-ut^2/2)\,{\rm d}t$, $q>0$), and 
truncated scale mixture of normal distributions ($g(u)\propto\int_{0}^{\infty}t^{p/2}\exp(-ut/2)\,{\rm d}H(t)$, $u\geq0$, $H$ being a
cumulative distribution function (CDF) on $(0,\infty)$). The DGF $g$ may include extra parameters in PDF (\ref{eliptrun}). For instance, 
the multivariate truncated $t$ distribution has the degrees of freedom parameter $\tau$, that controls the tail behaviour. The multivariate 
truncated normal distribution is a limiting case of the multivariate truncated $t$ distribution when $\tau\to\infty$. Some studies on 
multivariate truncated normal distributions are found in Birnbaum and Meyer \cite{BIR1}, Tallis \cite{TALL1,TALL2,TALL3}, Horrace \cite{HORR1} 
and Manjunath and Wilhelm \cite{MANJU1}. The multivariate truncated $t$ distribution with rectangular support is considered in 
Ho \textit{et al}.~\cite{HO1}. 

Let $W\sim\te_1(\mu,\sigma^2;(a,b);g)$. The CDF of $W$ is given by
\begin{equation}\label{fda-trunc-univ}
F_{W}(w)=\frac{F_{Z}\bigl(\frac{w-\mu}{\sigma}\bigr)-F_{Z}\bigl(\frac{a-\mu}{\sigma}\bigr)}{F_{Z}\bigl(\frac{b-\mu}{\sigma}\bigr)-
F_{Z}\bigl(\frac{a-\mu}{\sigma}\bigr)},\quad w\in(a,b),
\end{equation}
where $F_Z$ is the CDF of a random variable $Z$ having a standard symmetric distribution, $Z\sim\e_1(0,1;g)$. Equation 
(\ref{fda-trunc-univ}) is also valid when $a\to-\infty$ \,and/or \,$b\to\infty$. In this case, we have 
$F_{Z}((a-\mu)/\sigma)\to 0$ \,and/or\, $F_{Z}((b-\mu)/\sigma)\to1$.

Let $R=I_1\times\cdots\times I_p$ be a rectangle in $\mathbb{R}^p$, where $I_1,\ldots,I_p$ are intervals in $\mathbb{R}$ (finite or infinite). 
With no loss of generality, assume that $I_k=(a_k,b_k)$, $k=1,\ldots,p$. 
\begin{theorem}\label{tr-eltrun-condic}
If $\n{W}\sim\te_p(\n{\mu},\n{\Sigma};R;g)$, then
$W_k|\n{W}_{-k}\sim\te_1(\mu_{k.-k},\sigma_{k.-k}^2;(a_k,b_k);g_{k.-k})$, $k=1,\ldots,p$, where
$\mu_{k.-k} = \mu_k+\n{\Sigma}_{k,-k}\n{\Sigma}_{-k,-k}^{-1}(\n{w}_{-k}-\n{\mu}_{-k})$, 
$\sigma_{k.-k}^2 = \sigma_{kk} - \n{\Sigma}_{k,-k}\n{\Sigma}_{-k,-k}^{-1}\n{\Sigma}_{-k,k}$ \,and\, $g_{k.-k}(u) = g(u+q(\n{w}_{-k}))$, with
$q(\n{w}_{-k})=(\n{w}_{-k}-\n{\mu}_{-k})'\n{\Sigma}_{-k,-k}^{-1}(\n{w}_{-k}-\n{\mu}_{-k})$.
\end{theorem}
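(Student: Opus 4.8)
The plan is to compute the conditional density $f_{W_k\mid\n{W}_{-k}}(w_k\mid\n{w}_{-k})$ directly from the ratio $f_{\n{W}}(\n{w})/f_{\n{W}_{-k}}(\n{w}_{-k})$ and to recognize it as a univariate truncated elliptical density in the sense of Definition \ref{defeliptrunc} with $p=1$. Since the marginal $f_{\n{W}_{-k}}(\n{w}_{-k})$ does not depend on $w_k$, it suffices to treat it, together with the normalizing integral of the joint density (\ref{eliptrun}), as a single constant in $w_k$; thus I would work up to proportionality in $w_k$ and fix the normalization at the end by integrating over the admissible range of $w_k$. A convenient observation is that the normalizing constant of the joint density cancels in the ratio, so only the factor $g\big((\n{w}-\n{\mu})'\n{\Sigma}^{-1}(\n{w}-\n{\mu})\big)$ needs to be analyzed as a function of $w_k$.

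The central algebraic tool is the standard decomposition of the quadratic form obtained by completing the square in $w_k$ (equivalently, using the Schur complement of $\n{\Sigma}_{-k,-k}$ in $\n{\Sigma}$). First I would verify the identity
\[
(\n{w}-\n{\mu})'\n{\Sigma}^{-1}(\n{w}-\n{\mu})=\frac{(w_k-\mu_{k.-k})^2}{\sigma_{k.-k}^2}+q(\n{w}_{-k}),
\]
where $\mu_{k.-k}$, $\sigma_{k.-k}^2$ and $q(\n{w}_{-k})$ are exactly the quantities in the statement. This shows that, as a function of $w_k$ alone, the argument of $g$ is a shift by the constant $q(\n{w}_{-k})$ of the scaled squared deviation $(w_k-\mu_{k.-k})^2/\sigma_{k.-k}^2$. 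Defining $g_{k.-k}(u)=g(u+q(\n{w}_{-k}))$ then lets me absorb this shift into a new DGF, so that $g\big((\n{w}-\n{\mu})'\n{\Sigma}^{-1}(\n{w}-\n{\mu})\big)=g_{k.-k}\big((w_k-\mu_{k.-k})^2/\sigma_{k.-k}^2\big)$.

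Finally, I would use the assumption that the support is the rectangle $R=I_1\times\cdots\times I_p$: conditioning on $\n{W}_{-k}=\n{w}_{-k}$ (with $\n{w}_{-k}$ in the corresponding faces of $R$), the remaining coordinate $w_k$ ranges over $I_k=(a_k,b_k)$ irrespective of the value of $\n{w}_{-k}$. Hence the conditional density equals $g_{k.-k}\big((w_k-\mu_{k.-k})^2/\sigma_{k.-k}^2\big)$ divided by its integral over $(a_k,b_k)$, which is precisely the $\te_1(\mu_{k.-k},\sigma_{k.-k}^2;(a_k,b_k);g_{k.-k})$ density. I expect the main work to be the careful verification of the quadratic-form decomposition with the indices permuted so that the $k$-th coordinate plays the role of the conditioned variable; the conceptual crux, however, is the rectangular structure of $R$, since it is exactly this product form that makes the conditional truncation interval equal to $(a_k,b_k)$ and independent of $\n{w}_{-k}$ --- a clean conclusion that would fail for a general support $B$.
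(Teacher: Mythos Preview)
Your proposal is correct and follows essentially the same route as the paper: write the conditional density as the ratio with the joint normalizing constant cancelled, apply the Schur-complement identity $(\n{w}-\n{\mu})'\n{\Sigma}^{-1}(\n{w}-\n{\mu})=(w_k-\mu_{k.-k})^2/\sigma_{k.-k}^2+q(\n{w}_{-k})$, and use the product structure of $R$ to obtain the truncation interval $(a_k,b_k)$. The paper's proof is terser---it simply states the conditional density and invokes the quadratic-form identity---but the argument is the same as yours.
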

\begin{proof}
See \hyperlink{proof-tr-eltrun-condic}{Appendix \ref{proof-tr-eltrun-condic}}.
\end{proof}

Theorem \ref{tr-eltrun-condic} states that if a random vector $\n{W}$ has a truncated elliptical distribution with its support
being a rectangle in $\mathbb{R}^p$, then the conditional distribution of $W_k$ given $\n{W}_{-k}$ is truncated symmetric
with the same support of $W_k$. This fact is useful for obtaining the complete conditional distributions, from which
random samples from (\ref{fda-trunc-univ}) may be obtained using the inverse transformation method. This allows us to propose 
Algorithm \ref{algo-elip-trun} to generate random samples of the random vector $\n{W}\sim\te_p(\n{\mu},\n{\Sigma};R;g)$. 
We construct a Markov chain by sampling from the complete conditional distributions of $W_k|\n{W}_{-k}$, $k=1,\ldots,p$, 
given in Theorem \ref{tr-eltrun-condic}. Let $\n{w}^{(j)}$ be a sample generated in the $j$-th iteration, $j=1,\ldots,n$.

\begin{algorithm}\label{algo-elip-trun}
\begin{enumerate}
\item[\phantom{alejo}] 
\item Choose a starting value $\n{w}^{(0)}$ of the Markov chain.
\item Generate a random variable $u$ from a uniform distribution ${\mbox{U}}(0,1)$.
\item In each cycle $j=1,\ldots,n$, apply the inverse transformation method using (\ref{fda-trunc-univ}) to compute
\begin{equation*}
w_{k.-k}^{(j)}=\mu_{k.-k}^{(j)}+\sigma_{k.-k}^{(j)}F_{Z_k}^{-1}\biggl[u\biggl\{F_{Z_k}\biggl(\frac{b_k-\mu_{k.-k}^{(j)}}{\sigma_{k.-k}^{(j)}}\biggr)-
F_{Z_k}\biggl(\frac{a_k-\mu_{k.-k}^{(j)}}{\sigma_{k.-k}^{(j)}}\biggr)\biggr\}+F_{Z_k}\biggl(\frac{a_k-\mu_{k.-k}^{(j)}}{\sigma_{k.-k}^{(j)}}\biggr)\biggr],
\end{equation*}
where $Z_k\sim\e_1(0,1;g_{k.-k})$, for $k=1,\ldots,p$. This is the sampled value from the conditional distribution of
\begin{equation*}
w_k^{(j)}\,|\,w_1^{(j)},\ldots, w_{k-1}^{(j)},w_{k+1}^{(j-1)},\ldots,w_p^{(j-1)},\quad k=1,\ldots,p.
\end{equation*}
\end{enumerate}
\end{algorithm}

\section{The class of the Box--Cox elliptical distributions}
\label{S:3}

In this section, we define the class of the Box--Cox elliptical distributions and state several properties. First, we define the family of 
the extended Box--Cox transformations, which is a generalization of multivatiate Box--Cox transformations given in Quiroz \textit{et al}. 
\cite[Eq. $1.1$, $1.2$]{QUIROZ}. Using this new family of transformations, we define the class of the Box--Cox elliptical distributions. 
We then present various properties of these distributions regarding a characterization through truncated elliptical distributions with 
rectangular support, marginal and conditional distributions, independence, and mixed moments. Some of these properties will be needed 
for interpreting the parameters of the Box--Cox elliptical distributions (see Section \ref{S:4}). 

For each $\n{\xi}\in\mathbb{R}^p$, let $\n{D}_{\n{\xi}}$ be a diagonal matrix with diagonal elements $\n{\xi}$, \textit{i.e.}, 
$\n{D}_{\n{\xi}}=\diag\{\xi_1,$ $\ldots,\xi_p\}$. Let $R(\n{\xi})=I(\xi_1)\times\cdots\times I(\xi_p)$ be a rectangle in $\mathbb{R}^p$, 
where
\begin{equation}\label{conjuntos}
I(\xi_k)=
 \begin{cases}
  (-1/\xi_k,\infty), & \text{$\xi_k>0$},\\
  (-\infty,-1/\xi_k), & \text{$\xi_k<0$},\\
  (-\infty,\infty), & \text{$\xi_k=0$},
 \end{cases}
\end{equation}
for $k=1,\ldots,p$.

\begin{definition}\label{def-bc-transf-esten}
Let $\n{\lambda}\in\mathbb{R}^p$ and $\n{\mu}\in\mathbb{R}_{+}^p$. The extended Box--Cox transformation is defined by
$T_{\n{\lambda},\n{\mu}}:\mathbb{R}_{+}^p\to R(\n{\lambda})$ for the random vector $\n{Y}\in\mathbb{R}_+^p$ as 
$T_{\n{\lambda},\n{\mu}}(\n{Y})=\n{W}$, where $\n{W}$ is the $p$-dimensional vector with $k$-th element given by
\begin{equation}\label{bc-transf-esten}
W_k=
 \begin{cases}
  \dfrac{(Y_k/\mu_k)^{\lambda_k}-1}{\lambda_k}, & \text{$\lambda_k\neq0$},\vspace{0.2cm}\\
  \quad\!\log(Y_k/\mu_k), & \text{$\lambda_k=0$},
 \end{cases} 
\end{equation}
for $k=1,\ldots,p$.
\end{definition}

From Definition \ref{def-bc-transf-esten} we have that $\mu_k$ is a scale parameter for $Y_k$, for $k=1,\ldots,p$. If $\n{\mu}=\n{1}=(1,\ldots,1)'$ 
in (\ref{bc-transf-esten}) we obtain the multivariate Box--Cox transformation (Quiroz \textit{et al}. \cite[Eq. $1.1$, $1.2$]{QUIROZ}). 
Also, $T_{\n{\lambda},\n{\mu}}(\n{Y})\to T_{\n{0},\n{\mu}}(\n{Y})$ when $\n{\lambda}\to\n{0}=(0,\ldots,0)'$. Moreover, 
if $\n{\alpha}\in\mathbb{R}_+^p$, then $T_{\n{\lambda},\n{\mu}}(\n{D_\alpha}\n{Y})=T_{\n{\lambda},\n{D}_{\n{\alpha}}^{-1}\n{\mu}}(\n{Y})$. 
If $\n{\beta}\in R(\n{\lambda})$ and $\n{\gamma}=\n{1}+\n{D_{\lambda}\beta}$, then 
$\n{D}_{\n{\gamma}}^{-1}(T_{\n{\lambda},\n{\mu}}(\n{Y})-\n{\beta})=T_{\n{\lambda},\n{\delta}}(\n{Y})$, where
$\n{\delta}=T_{\n{\lambda},\n{\mu}}^{-1}(\n{\beta})$. 
These facts allow us to derive various properties of the Box--Cox elliptical distributions.

%Na Definição \ref{def-bcelip} introduzimos a classe de distribuições Box--Cox elípticas.

\begin{definition}\label{def-bcelip}
We say that the random vector $\n{Y}\in\mathbb{R}_{+}^p$ has a Box--Cox elliptical distribution with parameters
$\n{\mu}\in\mathbb{R}_{+}^p$, $\n{\lambda}\in\mathbb{R}^p$, 
$\n{\Sigma}(p\times p)>0$ and DGF $g$ if 
$T_{\n{\lambda},\n{\mu}}(\n{Y})\sim\te_p(\n{0},\n{\Sigma};R(\n{\lambda});g)$, and we write 
$\n{Y}\sim\mb_p(\n{\mu},\n{\lambda},\n{\Sigma};g)$. 
%The function $g$ is such that $g(u)\geq0$, for all $u\geq0$, and $\int_{0}^{\infty}t^{p-1}g(t^2)\,\textrm{d}t<\infty$.
\end{definition}

Equivalently, $\n{W}\sim\te_p(\n{0},\n{\Sigma};R(\n{\lambda});g)$ if 
$T_{\n{\lambda},\n{\mu}}^{-1}(\n{W})\sim\mb_p(\n{\mu},\n{\lambda},\n{\Sigma};g)$. If  
$\n{\lambda}=\n{0}$ in Definition \ref{def-bcelip}, then $\n{Y}$ follows a log-elliptical distribution with parameters
$\n{\mu}\in\mathbb{R}_+^p$, $\n{\Sigma}(p\times p)>0$ and DGF $g$ (Fang \textit{et al}. \cite{FANG1}), and we write 
$\n{Y}\sim\lel_p(\n{\mu},\n{\Sigma};g)$. 

From Definition \ref{def-bcelip} we have that the PDF of 
$\n{Y}\sim\mb_p(\n{\mu},\n{\lambda},\n{\Sigma};g)$ is given by
\begin{equation}\label{boxcoxelipmulti}
f_{\n{Y}}({\n y})=\dfrac{g(\n{w}'\n{\Sigma}^{-1}\n{w})\prod_{k=1}^{p}\frac{y_k^{\lambda_k-1}}{\mu_k^{\lambda_k}}}{\int_{R(\n{\lambda})}g(\n{w}'\n{\Sigma}^{-1}\n{w})\,{\rm d}\n{w}},\quad\n{w}=T_{\n{\lambda},\n{\mu}}(\n{y}),\quad\n{y}\in\mathbb{R}_{+}^p.
\end{equation} 

The case $p=1$ in (\ref{boxcoxelipmulti}) corresponds to the PDF of a positive random variable $Y$ with a Box--Cox symmetric 
distribution with parameters $\mu>0$, $\sigma>0$, $\lambda\in\mathbb{R}$ and DGF $g$ (Ferrari and Fumes \cite{FUMES1}), denoted by
$Y\sim\bcs(\mu,\sigma,\lambda;g)$. From Definition \ref{def-bcelip}, it is clear that each member of the class of the truncated 
elliptical distributions has its corresponding member in the class of the Box--Cox elliptical distributions, which is identified 
by its DGF $g$. Hence, by replacing $g(u)\propto\exp(-u/2)$,  $u\geq0$, in (\ref{boxcoxelipmulti}) we obtain 
the PDF of a random vector $\n{Y}\in\mathbb{R}_+^p$ with a multivariate Box--Cox normal distribution with parameters 
$\n{\mu}\in\mathbb{R}_{+}^p$, $\n{\lambda}\in\mathbb{R}^p$ \,and \,$\n{\Sigma}(p\times p)>0$, denoted by 
$\n{Y}\sim\mbn_p(\n{\mu},\n{\lambda},\n{\Sigma})$. When $g(u)\propto(1+u/\tau)^{-(\tau+p)/2}$, $\tau>0$, $u\geq0$, in 
(\ref{boxcoxelipmulti}) we have the PDF of a random vector $\n{Y}\in\mathbb{R}_+^p$ with a multivariate Box--Cox $t$ distribution 
with parameters $\n{\mu}\in\mathbb{R}_{+}^p$, $\n{\lambda}\in\mathbb{R}^p$, $\n{\Sigma}(p\times p)>0$ \,and \,$\tau>0$ degrees of 
freedom, denoted by $\n{Y}\sim\mbt_p(\n{\mu},\n{\lambda},\n{\Sigma};\tau)$. In these cases, when $\n{\lambda}=\n{0}$, we get the 
PDF of $\n{Y}\in\mathbb{R}_+^p$ with multivariate log-normal and log-$t$ distributions, denoted by 
$\n{Y}\sim\mln_p(\n{\mu},\n{\Sigma})$ and $\n{Y}\sim\mlt_p(\n{\mu},\n{\Sigma};\tau)$, respectively. As expected, the multivariate 
Box--Cox normal distribution is a limiting case of the multivariate Box--Cox $t$ distribution as $\tau\to\infty$. Other members of the 
class of the Box--Cox elliptical distributions include the multivariate Box--Cox power exponential distribution, 
the multivariate Box--Cox slash distribution, and the multivariate Box-Cox scale mixture of normal distributions.

Figure \ref{fig-bct-biv} shows plots of the PDF of 
$\n{Y}\sim\mbt_2(\n{\mu},\n{\lambda},\n{\Sigma};\tau)$ for different parameter values. The legend indicates the values of
all the parameters considered in the first plot and the value of the parameter that is changed from a plot to the next (in 
alphabetical order). Note that the parameter $\sigma_{12}$ impacts the association between the marginal distributions 
of $Y_1$ and $Y_2$ (Figures \ref{fig-bct-biv}(a), \ref{fig-bct-biv}(b) and \ref{fig-bct-biv}(c)). The
parameter $\mu_1$  affects the scale of the marginal distribution of $Y_1$ (Figures \ref{fig-bct-biv}(c) and 
\ref{fig-bct-biv}(d)). The parameter $\sigma_{22}$ influences the dispersion of the marginal distribution of 
$Y_2$ (Figures \ref{fig-bct-biv}(d) and \ref{fig-bct-biv}(e)). The parameters $\lambda_1$ and $\lambda_2$ control 
the skewness of the respective marginal distribution of $Y_1$ and $Y_2$ (Figures \ref{fig-bct-biv}(e), 
\ref{fig-bct-biv}(f), \ref{fig-bct-biv}(g) and \ref{fig-bct-biv}(h)). In Figure \ref{fig-bct-biv}(g), for which 
$\lambda_1=\lambda_2=1$, it is clear that the contour lines are (truncated) ellipses;
%$\mathbb{R}_{+}^2$ 
this fact is stated in item $3$ of Theorem \ref{prop-mbce}. Additionally, as the degrees of freedom parameter grows, the 
contour lines corresponding to the bivariate Box--Cox $t$ distributions tend to the contour lines of bivariate Box--Cox 
normal distributions. Moreover, the tails of the Box--Cox $t$ distributions seem to be heavier for smaller values of 
$\tau$ (Figures \ref{fig-bct-biv}(h) and \ref{fig-bct-biv}(i)).

Definition \ref{def-bcelip} characterizes the Box--Cox elliptical distributions from truncated elliptical distributions 
with support in $R(\n{\lambda})$ and parameter $\n{\mu}=\n{0}.$ %$\in R(\n{\lambda})$. 
In Theorem \ref{carac-bc-elip}, we present a characterization of Box--Cox elliptical distributions from truncated elliptical distributions with support in $R(\n{\lambda})$ and parameter $\n{\mu}=\n{\xi}$.%$\in R(\n{\lambda})$.

\begin{theorem}\label{carac-bc-elip}
Let $\n{\mu}\in\mathbb{R}_+^p$, $\n{\lambda}\in\mathbb{R}^p$, $\n{\xi}\in R(\n{\lambda})$, 
$\n{\alpha}=\n{1}+\n{D_{\lambda}\xi}$ \,and \,$\n{\Sigma}(p\times p)>0$. Then, 
$T_{\n{\lambda},\n{\mu}}(\n{Y})\sim\te_p(\n{\xi},\n{\Sigma};R(\n{\lambda});g)$ if and only if
$\n{Y}\sim\mb_p(\n{\delta},\n{\lambda},\n{D}_{\n{\alpha}}^{-1}\n{\Sigma}\n{D}_{\n{\alpha}}^{-1};g)$, 
where $\n{\delta}=T_{\n{\lambda},\n{\mu}}^{-1}(\n{\xi})$.
\end{theorem}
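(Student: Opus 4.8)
The plan is to reduce the claim to Definition \ref{def-bcelip} by exploiting the affine--equivariance identity for the extended Box--Cox transformation recorded in the paragraph after Definition \ref{def-bc-transf-esten}. Writing $\n{W}=T_{\n{\lambda},\n{\mu}}(\n{Y})$, the assertion $\n{Y}\sim\mb_p(\n{\delta},\n{\lambda},\n{D}_{\n{\alpha}}^{-1}\n{\Sigma}\n{D}_{\n{\alpha}}^{-1};g)$ means, by Definition \ref{def-bcelip}, that $T_{\n{\lambda},\n{\delta}}(\n{Y})\sim\te_p(\n{0},\n{D}_{\n{\alpha}}^{-1}\n{\Sigma}\n{D}_{\n{\alpha}}^{-1};R(\n{\lambda});g)$. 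Taking $\n{\beta}=\n{\xi}\in R(\n{\lambda})$ in that identity produces $\n{\gamma}=\n{1}+\n{D}_{\n{\lambda}}\n{\xi}=\n{\alpha}$ and $T_{\n{\lambda},\n{\delta}}(\n{Y})=\n{D}_{\n{\alpha}}^{-1}(\n{W}-\n{\xi})$ with $\n{\delta}=T_{\n{\lambda},\n{\mu}}^{-1}(\n{\xi})$, matching exactly the parameters in the statement. Hence it suffices to prove the purely distributional equivalence
\[
\n{W}\sim\te_p(\n{\xi},\n{\Sigma};R(\n{\lambda});g)\iff\n{D}_{\n{\alpha}}^{-1}(\n{W}-\n{\xi})\sim\te_p(\n{0},\n{D}_{\n{\alpha}}^{-1}\n{\Sigma}\n{D}_{\n{\alpha}}^{-1};R(\n{\lambda});g).
\]

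First I would check that $\n{\alpha}\in\mathbb{R}_+^p$, so that $\n{D}_{\n{\alpha}}$ is invertible with positive determinant and $\n{w}\mapsto\n{D}_{\n{\alpha}}^{-1}(\n{w}-\n{\xi})$ is a genuine affine bijection; indeed, $\n{\xi}\in R(\n{\lambda})$ forces $\alpha_k=1+\lambda_k\xi_k>0$ in each of the three regimes of (\ref{conjuntos}). Then I would verify the support invariance $\n{D}_{\n{\alpha}}^{-1}(R(\n{\lambda})-\n{\xi})=R(\n{\lambda})$. This is the crux, and I expect it to be the main obstacle, since it must be checked componentwise across the cases $\lambda_k>0$, $\lambda_k<0$ and $\lambda_k=0$. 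For $\lambda_k>0$ the endpoint computation $(-1/\lambda_k-\xi_k)/\alpha_k=-1/\lambda_k$, which uses $-1/\lambda_k-\xi_k=-\alpha_k/\lambda_k$, shows that the ray $(-1/\lambda_k,\infty)$ is mapped onto itself; the cases $\lambda_k<0$ and $\lambda_k=0$ are entirely analogous.

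With the support settled, the equivalence is a routine change of variables in PDF (\ref{eliptrun}). Setting $\n{\Sigma}_{\ast}=\n{D}_{\n{\alpha}}^{-1}\n{\Sigma}\n{D}_{\n{\alpha}}^{-1}$ and $\n{v}=\n{D}_{\n{\alpha}}^{-1}(\n{w}-\n{\xi})$, the quadratic form satisfies $(\n{w}-\n{\xi})'\n{\Sigma}^{-1}(\n{w}-\n{\xi})=\n{v}'\n{\Sigma}_{\ast}^{-1}\n{v}$, because $\n{D}_{\n{\alpha}}$ is diagonal (hence symmetric) and $\n{\Sigma}_{\ast}^{-1}=\n{D}_{\n{\alpha}}\n{\Sigma}^{-1}\n{D}_{\n{\alpha}}$. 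The constant Jacobian $\det(\n{D}_{\n{\alpha}})$ cancels between the transformed density and its normalizing integral, while the support invariance from the previous step turns the integral over $R(\n{\lambda})$ back into one over $R(\n{\lambda})$. This yields precisely the $\te_p(\n{0},\n{\Sigma}_{\ast};R(\n{\lambda});g)$ density for $\n{V}=\n{D}_{\n{\alpha}}^{-1}(\n{W}-\n{\xi})$; since the map is bijective, the implication runs in both directions. Combining this with Definition \ref{def-bcelip} as above closes the proof.
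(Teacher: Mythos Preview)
Your proposal is correct and follows essentially the same route as the paper's proof: both hinge on the affine map $\n{w}\mapsto\n{D}_{\n{\alpha}}^{-1}(\n{w}-\n{\xi})$ being a bijection of $R(\n{\lambda})$ onto itself, the identity $\n{D}_{\n{\alpha}}^{-1}(T_{\n{\lambda},\n{\mu}}(\n{Y})-\n{\xi})=T_{\n{\lambda},\n{\delta}}(\n{Y})$, and a change of variables in the truncated elliptical density combined with Definition~\ref{def-bcelip}. You are in fact more explicit than the paper on two points it takes for granted --- the positivity of each $\alpha_k$ and the componentwise verification that $R(\n{\lambda})$ is invariant under the affine map --- and you compress the two directions into a single bijection argument, whereas the paper treats them separately.
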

\begin{proof}
See Appendix \ref{proof-carac-bc-elip}.
\end{proof}

\begin{figure}
\captionsetup{font=scriptsize}
\centering
\includegraphics[scale=0.64,trim={0.45cm 0.7cm 0cm 0.3cm},clip]{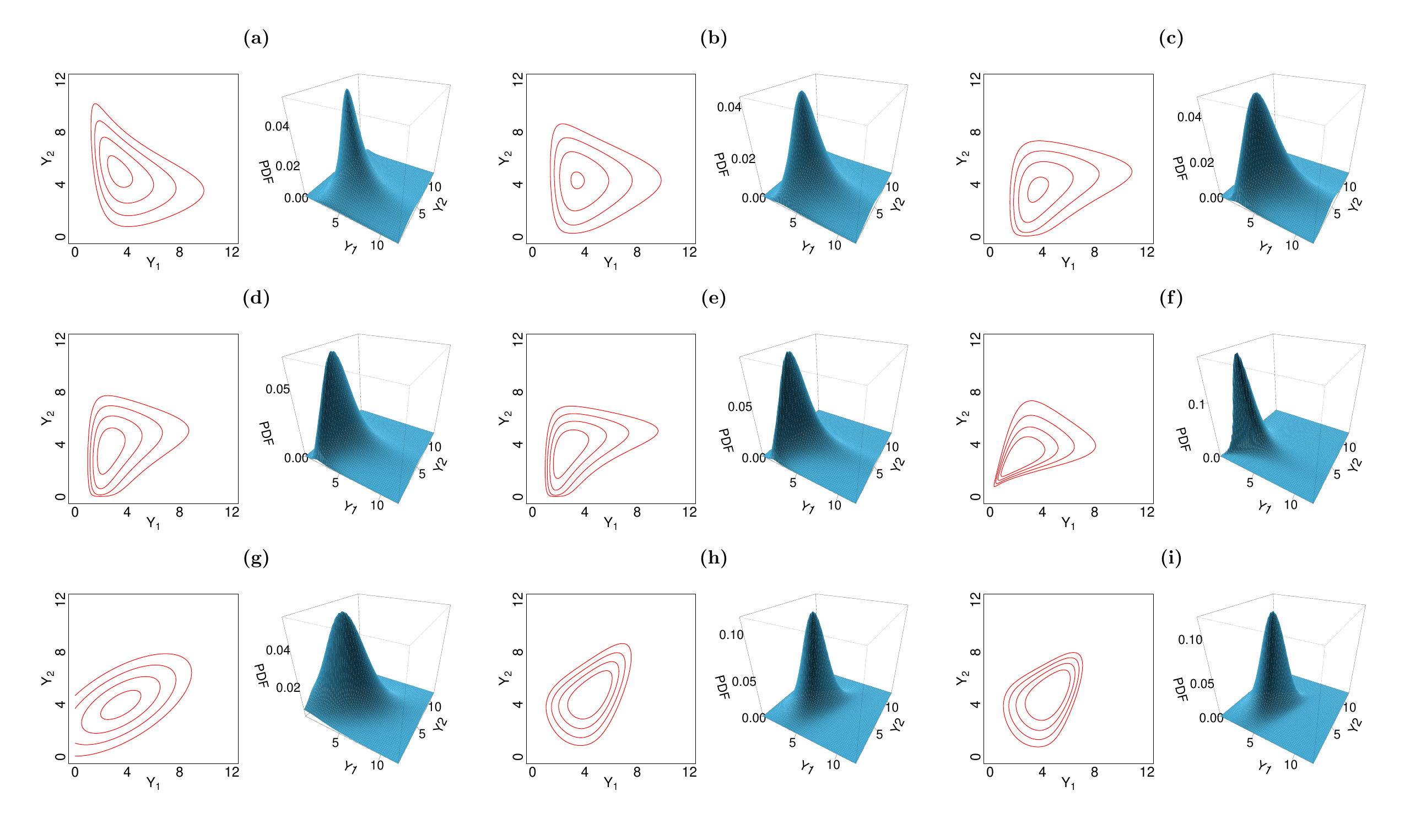}
\caption{Contour plots at levels $0.04$, $0.02$, $0.01$, $0.005$ and joint PDF of 
$\n{Y}\sim\mbt_2(\n{\mu},\n{\lambda},\n{\Sigma};\tau)$, where  \textbf{(a)} $\mu_1=5$, $\mu_2=4$, $\lambda_1=-1$, 
$\lambda_2=1.5$, $\sigma_{11}=0.5$, $\sigma_{22}=0.3$, $\sigma_{12}=-0.2$, $\tau=3$, \textbf{(b)} $\sigma_{12}=0$,
\textbf{(c)} $\sigma_{12}=0.2$, \textbf{(d)} $\mu_1=3.5$, \textbf{(e)} $\sigma_{22}=0.2$, \textbf{(f)} $\lambda_2=-1.5$,
\textbf{(g)} $\lambda_1=\lambda_2=1$, \textbf{(h)} $\lambda_2=2$, \textbf{(i)} $\tau=10$.}
\label{fig-bct-biv}
\end{figure}

In Theorem \ref{prop-mbce} we state various distributional results concerning the Box--Cox elliptical distributions. 
Items $1$ and $2$ consider some transformations of Box--Cox elliptical random vectors, and item $3$ states that the 
class of the truncated elliptical distributions with support in $\mathbb{R}_+^p$ and parameter $\n{\mu}\in\mathbb{R}_+^p$ 
is obtained from the class of the Box--Cox elliptical distributions.

\begin{theorem}\label{prop-mbce}
Let $\n{\mu}\in\mathbb{R}_+^p$, $\n{\lambda}\in\mathbb{R}^p$, 
$\n{\Sigma}(p\times p)>0$ \,and \,$\n{Y}\sim\mb_p(\n{\mu},\n{\lambda},\n{\Sigma};g)$.
\begin{enumerate}
 \item If $\n{\alpha}\in\mathbb{R}_{+}^p$, then $\n{D_{\alpha}Y}\sim\mb_p(\n{D_\alpha\mu},\n{\lambda},\n{\Sigma};g)$.
 \item If $\n{\beta}\in\mathbb{R}^p\setminus\n{0}$ and $\n{U}\in\mathbb{R}_+^p$ is the random vector 
 with components $U_k=(Y_k/\mu_k)^{\beta_k}$, $k=1,\ldots,p$, then  
 $\n{U}\sim\mb_p\bigl(\n{1},\n{D}_{\n{\beta}}^{-1}\n{\lambda},\n{D_\beta\Sigma D_\beta};g\bigr)$.
 \item If $\n{\lambda}=\n{1}$, then $\n{Y}\sim\te_p(\n{\mu},\n{D_{\mu}\Sigma D_{\mu}};\mathbb{R}_{+}^p;g)$.
\end{enumerate}
\end{theorem}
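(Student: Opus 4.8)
The plan is to reduce all three items to two ingredients: the algebraic identities satisfied by the extended Box--Cox transformation (recorded after Definition \ref{def-bc-transf-esten}), together with an affine-transformation property of the truncated elliptical class that I would first establish as an auxiliary fact. Namely, if $\n{W}\sim\te_p(\n{\nu},\n{\Omega};B;g)$, $\n{A}(p\times p)$ is nonsingular and $\n{b}\in\mathbb{R}^p$, then $\n{A}\n{W}+\n{b}\sim\te_p(\n{A}\n{\nu}+\n{b},\n{A}\n{\Omega}\n{A}';\n{A}B+\n{b};g)$. This follows from a routine change of variables in (\ref{eliptrun}): writing $\n{v}=\n{A}\n{w}+\n{b}$, the quadratic form $(\n{w}-\n{\nu})'\n{\Omega}^{-1}(\n{w}-\n{\nu})$ becomes $(\n{v}-\n{A}\n{\nu}-\n{b})'(\n{A}\n{\Omega}\n{A}')^{-1}(\n{v}-\n{A}\n{\nu}-\n{b})$, the Jacobian factor $|\det\n{A}|^{-1}$ cancels between the numerator and the normalizing integral once the same substitution is applied to the latter, and the support is carried to $\n{A}B+\n{b}$.

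For item $1$ I would not need this lemma at all. By Definition \ref{def-bcelip} it suffices to show $T_{\n{\lambda},\n{D_\alpha\mu}}(\n{D_\alpha Y})\sim\te_p(\n{0},\n{\Sigma};R(\n{\lambda});g)$, and the scale identity $T_{\n{\lambda},\n{\mu}}(\n{D_\alpha}\n{Y})=T_{\n{\lambda},\n{D}_\alpha^{-1}\n{\mu}}(\n{Y})$, applied with $\n{\mu}$ replaced by $\n{D_\alpha\mu}$, gives $T_{\n{\lambda},\n{D_\alpha\mu}}(\n{D_\alpha Y})=T_{\n{\lambda},\n{\mu}}(\n{Y})$, which has the required distribution by hypothesis. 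For item $3$, setting $\n{\lambda}=\n{1}$ makes $T_{\n{1},\n{\mu}}(\n{Y})=\n{D}_\mu^{-1}\n{Y}-\n{1}$, so that $\n{W}:=T_{\n{1},\n{\mu}}(\n{Y})\sim\te_p(\n{0},\n{\Sigma};R(\n{1});g)$ and $\n{Y}=\n{D_\mu}\n{W}+\n{\mu}$. Applying the lemma with $\n{A}=\n{D_\mu}$ and $\n{b}=\n{\mu}$ yields $\n{Y}\sim\te_p(\n{\mu},\n{D_\mu\Sigma D_\mu};\n{D_\mu}R(\n{1})+\n{\mu};g)$, and since $R(\n{1})=(-1,\infty)^p$ the support maps coordinatewise to $\mu_k(-1,\infty)+\mu_k=(0,\infty)$, i.e. to $\mathbb{R}_{+}^p$, which is the claim.

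For item $2$ I would first compute the transformation explicitly. Assuming $\n{D_\beta}$ nonsingular (each $\beta_k\neq0$, as the notation $\n{D}_\beta^{-1}\n{\lambda}$ requires), the $k$-th coordinate of $T_{\n{D}_\beta^{-1}\n{\lambda},\n{1}}(\n{U})$ equals $(U_k^{\lambda_k/\beta_k}-1)/(\lambda_k/\beta_k)$, and substituting $U_k=(Y_k/\mu_k)^{\beta_k}$ collapses this to $\beta_k\,((Y_k/\mu_k)^{\lambda_k}-1)/\lambda_k=\beta_k W_k$, where $\n{W}=T_{\n{\lambda},\n{\mu}}(\n{Y})\sim\te_p(\n{0},\n{\Sigma};R(\n{\lambda});g)$; the $\lambda_k=0$ coordinate is handled by the same limit that defines (\ref{bc-transf-esten}). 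Hence $T_{\n{D}_\beta^{-1}\n{\lambda},\n{1}}(\n{U})=\n{D_\beta}\n{W}$, and the lemma with $\n{A}=\n{D_\beta}$, $\n{b}=\n{0}$ gives $\n{D_\beta}\n{W}\sim\te_p(\n{0},\n{D_\beta\Sigma D_\beta};\n{D_\beta}R(\n{\lambda});g)$. By Definition \ref{def-bcelip} this is exactly $\n{U}\sim\mb_p(\n{1},\n{D}_\beta^{-1}\n{\lambda},\n{D_\beta\Sigma D_\beta};g)$, provided $\n{D_\beta}R(\n{\lambda})=R(\n{D}_\beta^{-1}\n{\lambda})$.

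I expect this last support identity to be the only delicate point. It reduces coordinatewise to $\beta_k\,I(\lambda_k)=I(\lambda_k/\beta_k)$, which I would verify by the sign bookkeeping dictated by (\ref{conjuntos}): when $\beta_k>0$ the map $t\mapsto\beta_k t$ preserves orientation and sends $(-1/\lambda_k,\infty)$ to $(-1/(\lambda_k/\beta_k),\infty)$, with the analogous statement for $\lambda_k<0$; when $\beta_k<0$ it reverses orientation and interchanges the two cases of (\ref{conjuntos}) exactly as the sign of $\lambda_k/\beta_k$ prescribes; and the case $\lambda_k=0$ is immediate since both sides equal $\mathbb{R}$. All remaining steps are the routine change-of-variables computations indicated above.
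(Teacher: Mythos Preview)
Your argument is correct. The route differs from the paper's in a clean and instructive way: the paper proves all three items by direct manipulation of the PDF (\ref{boxcoxelipmulti}), computing Jacobians for the relevant changes of variables, whereas you work entirely at the level of Definition~\ref{def-bcelip}, reducing everything to algebraic identities for $T_{\n{\lambda},\n{\mu}}$ together with an affine-closure lemma for the truncated elliptical class. For item~$1$ the two approaches collapse to the same identity $T_{\n{\lambda},\n{D_\alpha\mu}}(\n{D_\alpha}\n{y})=T_{\n{\lambda},\n{\mu}}(\n{y})$; for items~$2$ and~$3$ your lemma replaces the explicit Jacobian bookkeeping the paper performs. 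Your method is somewhat more conceptual and makes the role of the support $R(\n{\lambda})$ more transparent---in particular, the coordinatewise identity $\beta_k I(\lambda_k)=I(\lambda_k/\beta_k)$ is exactly what the paper's integral substitution in item~$2$ encodes implicitly. The paper's approach, on the other hand, stays closer to the density (\ref{boxcoxelipmulti}) and so requires no auxiliary statement about the $\te_p$ class.
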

\begin{proof}
See Appendix \ref{proof-prop-mbce}.
\end{proof}

In order to state results on marginal and conditional distributions let us consider the partitions of
$\n{Y}\in\mathbb{R}_+^p$, $\n{\mu}\in\mathbb{R}_+^p$,
$\n{\lambda}\in\mathbb{R}^p$\, and \,$\n{\Sigma}(p\times p)>0$ as
\begin{equation}\label{particion-bc}
\n{Y}=(\n{Y}_1',\n{Y}_2')',\quad
\n{\mu}=(\n{\mu}_1',\n{\mu}_2')',\quad
\n{\lambda}=(\n{\lambda}_1',\n{\lambda}_2')',\quad
\n{\Sigma}=\begin{bmatrix}
             \n{\Sigma}_{11} & \n{\Sigma}_{12}\\
             \n{\Sigma}_{21} & \n{\Sigma}_{22}
            \end{bmatrix},
\end{equation}
with $\n{Y}_1\in\mathbb{R}_+^r$, $\n{Y}_2\in\mathbb{R}_+^{p-r}$, $\n{\mu}_1\in\mathbb{R}_+^r$, 
$\n{\mu}_2\in\mathbb{R}_+^{p-r}$, $\n{\lambda}_1\in\mathbb{R}^r$, $\n{\lambda}_2\in\mathbb{R}^{p-r}$, 
$\n{\Sigma}_{11}(r\times r)>0$, $\n{\Sigma}_{22}((p-r)\times(p-r))>0$ \,and \,$\n{\Sigma}_{12}(r\times(p-r))$ such that 
$\n{\Sigma}_{21}=\n{\Sigma}_{12}'$. The rectangle $R(\n{\lambda})$ can be written as
$R(\n{\lambda})=R(\n{\lambda}_1)\times R(\n{\lambda}_2)$, where 
$R(\n{\lambda}_1)=I(\lambda_1)\times\cdots\times I(\lambda_r)\in\mathbb{R}^r$ \,and \,$R(\n{\lambda}_2)=I(\lambda_{r+1})\times\cdots\times I(\lambda_p)\in\mathbb{R}^{p-r}$.

Let $\n{Y}\in\mathbb{R}_+^p$, $\n{\mu}\in\mathbb{R}_+^p$, $\n{\lambda}\in\mathbb{R}^p$, $\n{\Sigma}(p\times p)>0$ 
partitioned as in (\ref{particion-bc}) and such that $\n{Y}\sim\mb_p(\n{\mu},\n{\lambda},\n{\Sigma};g)$. The marginal PDF 
of $\n{Y}_1$ is given by
\begin{equation}
f_{\n{Y}_1}(\n{y}_1) = \dfrac{\bigr\{\int_{R(\n{\lambda_2})}g(\n{w}'\n{\Sigma}^{-1}\n{w})\,{\rm d}\n{w}_2\bigl\}\prod_{k=1}^{r}\frac{y_k^{\lambda_k-1}}{\mu_k^{\lambda_k}}}{\int_{R(\n{\lambda})}g(\n{w}'\n{\Sigma}^{-1}\n{w})\,{\rm d}\n{w}},\quad\n{y}_1\in\mathbb{R}_{+}^r,\label{margy1}\\
%f_{\n{Y}_2}(\n{y}_2) &= \dfrac{\bigr\{\int_{R(\n{\lambda_1})}g\bigl(\n{w}'\n{\Sigma}^{-1}\n{w}\bigr)\,{\rm d}\n{w}_1\bigl\}\prod_{k=r+1}^{p}\frac{y_k^{\lambda_k-1}}{\mu_k^{\lambda_k}}}{\int_{R(\n{\lambda})}g\bigl(\n{w}'\n{\Sigma}^{-1}\n{w}\bigr)\,{\rm d}\n{w}},\quad\n{y}_2\in\mathbb{R}_{+}^{p-r},\label{margy2}
\end{equation}
where $\n{w}=(\n{w}_1',\n{w}_2')'$, with 
$\n{w}_1=T_{\n{\lambda}_1,\n{\mu}_1}(\n{y}_1)$ \,and\, $\n{w}_2=T_{\n{\lambda}_2,\n{\mu}_2}(\n{y}_2)$. Clearly, the  
marginal PDF (\ref{margy1}) is not necessarily of the form (\ref{boxcoxelipmulti}). This form is possible when 
$\n{\Sigma}_{12}=\n{0}$, \textit{i.e.}, when the matrix $\n{\Sigma}(p\times p)>0$ is block-diagonal. In Theorem \ref{bc-marg} 
this fact is stated.

\begin{theorem}\label{bc-marg}
Let $\n{Y}\in\mathbb{R}_+^p$, $\n{\mu}\in\mathbb{R}_+^p$, $\n{\lambda}\in\mathbb{R}^p$, $\n{\Sigma}(p\times p)>0$ 
partitioned as in (\ref{particion-bc}) and such that $\n{Y}\sim\mb_p(\n{\mu},\n{\lambda},\n{\Sigma};g)$. If 
$\n{\Sigma}_{12}=\n{0}$, then $\n{Y}_1\sim\mb_r(\n{\mu}_1,\n{\lambda}_1,\n{\Sigma}_{11};g_1)$, where
\begin{equation*}
g_1(u)=\int_{T(R(\n{\lambda}_2))}g(u+\n{s}'\n{s})\,{\rm{d}}\n{s},\quad u\geq0, 
\end{equation*}
with $T:\mathbb{R}^{p-r}\to\mathbb{R}^{p-r}$ being the transformation 
$T(\n{x})=\n{\Sigma}_{22}^{-1/2}\n{x}$.
% \,and \,$g_1$ such that $\int_{0}^{\infty}t^{r-1}g_1(t^2)\,\textrm{d}t<\infty$.
\end{theorem}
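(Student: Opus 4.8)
The plan is to start from the explicit marginal density in (\ref{margy1}) and show that, under $\n{\Sigma}_{12}=\n{0}$, it collapses exactly into the Box--Cox elliptical form (\ref{boxcoxelipmulti}) of dimension $r$ with generator $g_1$. First I would exploit the block-diagonal structure: since $\n{\Sigma}_{12}=\n{0}$ we have $\n{\Sigma}^{-1}=\diag\{\n{\Sigma}_{11}^{-1},\n{\Sigma}_{22}^{-1}\}$, so the quadratic form separates as $\n{w}'\n{\Sigma}^{-1}\n{w}=\n{w}_1'\n{\Sigma}_{11}^{-1}\n{w}_1+\n{w}_2'\n{\Sigma}_{22}^{-1}\n{w}_2$, where $\n{w}_1=T_{\n{\lambda}_1,\n{\mu}_1}(\n{y}_1)$ and $\n{w}_2=T_{\n{\lambda}_2,\n{\mu}_2}(\n{y}_2)$. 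I would also use the factorization $R(\n{\lambda})=R(\n{\lambda}_1)\times R(\n{\lambda}_2)$ already recorded above (\ref{margy1}).

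The key computation is a single change of variables in the integral over $\n{w}_2$, applied consistently to both numerator and denominator. Writing $u=\n{w}_1'\n{\Sigma}_{11}^{-1}\n{w}_1$ and substituting $\n{s}=\n{\Sigma}_{22}^{-1/2}\n{w}_2=T(\n{w}_2)$, one has $\n{w}_2'\n{\Sigma}_{22}^{-1}\n{w}_2=\n{s}'\n{s}$, the domain $R(\n{\lambda}_2)$ maps onto $T(R(\n{\lambda}_2))$, and ${\rm d}\n{w}_2=\det(\n{\Sigma}_{22})^{1/2}\,{\rm d}\n{s}$. Hence the inner integral in the numerator of (\ref{margy1}) becomes $\det(\n{\Sigma}_{22})^{1/2}\int_{T(R(\n{\lambda}_2))}g(u+\n{s}'\n{s})\,{\rm d}\n{s}=\det(\n{\Sigma}_{22})^{1/2}\,g_1(u)$. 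Factoring the denominator as $\int_{R(\n{\lambda}_1)}\int_{R(\n{\lambda}_2)}$ and applying the identical substitution on the inner $\n{w}_2$-slice yields $\det(\n{\Sigma}_{22})^{1/2}\int_{R(\n{\lambda}_1)}g_1(\n{w}_1'\n{\Sigma}_{11}^{-1}\n{w}_1)\,{\rm d}\n{w}_1$. The common factor $\det(\n{\Sigma}_{22})^{1/2}$ then cancels, leaving $f_{\n{Y}_1}(\n{y}_1)=g_1(\n{w}_1'\n{\Sigma}_{11}^{-1}\n{w}_1)\prod_{k=1}^{r}(y_k^{\lambda_k-1}/\mu_k^{\lambda_k})\big/\int_{R(\n{\lambda}_1)}g_1(\n{w}_1'\n{\Sigma}_{11}^{-1}\n{w}_1)\,{\rm d}\n{w}_1$, which is precisely (\ref{boxcoxelipmulti}) in dimension $r$ with parameters $\n{\mu}_1,\n{\lambda}_1,\n{\Sigma}_{11}$ and generator $g_1$. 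By Definition \ref{def-bcelip} this identifies $\n{Y}_1\sim\mb_r(\n{\mu}_1,\n{\lambda}_1,\n{\Sigma}_{11};g_1)$.

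It then remains to confirm that $g_1$ is an admissible DGF in the sense of Definition \ref{defeliptrunc}. Nonnegativity, $g_1(u)\ge0$ for $u\ge0$, is immediate from $g\ge0$. For the integrability requirement $\int_0^\infty t^{r-1}g_1(t^2)\,{\rm d}t<\infty$, I would pass to the equivalent form $\int_{\mathbb{R}^r}g_1(\n{x}'\n{x})\,{\rm d}\n{x}<\infty$ and, since the integrand is nonnegative, invoke Tonelli's theorem together with the inclusion $T(R(\n{\lambda}_2))\subseteq\mathbb{R}^{p-r}$ to bound $\int_{\mathbb{R}^r}\int_{T(R(\n{\lambda}_2))}g(\n{x}'\n{x}+\n{s}'\n{s})\,{\rm d}\n{s}\,{\rm d}\n{x}\le\int_{\mathbb{R}^p}g(\n{z}'\n{z})\,{\rm d}\n{z}<\infty$, the last finiteness being exactly the admissibility of the original $g$. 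The computation itself is routine; the points requiring the most care are ensuring that the Jacobian $\det(\n{\Sigma}_{22})^{1/2}$ is generated identically in numerator and denominator so that it cancels cleanly, and verifying that $g_1$ inherits the integrability of $g$, which is the only genuinely delicate step.
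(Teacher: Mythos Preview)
Your proposal is correct and follows essentially the same route as the paper: plug $\n{\Sigma}_{12}=\n{0}$ into (\ref{margy1}), use the block-diagonal splitting of the quadratic form, apply the change of variables $\n{s}=\n{\Sigma}_{22}^{-1/2}\n{w}_2$ in both numerator and denominator so the Jacobian cancels, and then verify that $g_1$ is an admissible DGF by bounding it by $h_1(u)=\int_{\mathbb{R}^{p-r}}g(u+\n{s}'\n{s})\,{\rm d}\n{s}$. The only cosmetic difference is that the paper appeals to Fang \textit{et al.}\ for the integrability of $h_1$, whereas you give the equivalent self-contained Tonelli argument.
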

\begin{proof}
See Appendix \ref{proof-bc-marg}.
\end{proof}

When the matrix $\n{\Sigma}(p\times p)>0$ is a diagonal matrix, all the marginal distributions are Box--Cox symmetric distributions.  
This fact is stated in Corollary \ref{marg-uni-bcs}.

\begin{corollary}\label{marg-uni-bcs}
Let $\n{\mu}\in\mathbb{R}_+^p$, $\n{\lambda}\in\mathbb{R}^p$, 
$\n{\Sigma}=\diag\{\sigma_{11},\ldots,\sigma_{pp}\}>0$ \,and \,$\n{Y}\sim\mb_p(\n{\mu},\n{\lambda},\n{\Sigma};g)$. Then, 
$Y_k\sim\bcs(\mu_k,\sqrt{\sigma_{kk}},\lambda_k;g_k)$, $k=1,\ldots,p$, where
\begin{equation*}\label{bc-marg-univ}
g_{k}(u)=\int_{R(\n{\Sigma}_{-k,-k}^{1/2}\n{\lambda}_{-k})}g(u+\n{s}'\n{s})\,{\rm{d}}\n{s}, \quad u\geq0.
\end{equation*}
%that is such that $\int_{0}^{\infty}g_k(t^2)\,\textrm{d}t<\infty$.
\end{corollary}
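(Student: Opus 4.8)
The plan is to obtain Corollary \ref{marg-uni-bcs} as the $r = 1$ specialization of Theorem \ref{bc-marg}. First I would fix $k \in \{1,\ldots,p\}$ and partition $\n{Y}$ into the scalar block $\n{Y}_1 = Y_k$ and the complementary block $\n{Y}_2 = \n{Y}_{-k}$, so that $r = 1$ and $p - r = p - 1$. Because $\n{\Sigma} = \diag\{\sigma_{11},\ldots,\sigma_{pp}\}$ is diagonal, the cross-block $\n{\Sigma}_{12}$ associated with any such partition vanishes, and the diagonal entries are unchanged by relabeling the coordinates; hence there is no loss of generality in placing $Y_k$ in the leading position, and the hypothesis $\n{\Sigma}_{12} = \n{0}$ of Theorem \ref{bc-marg} is satisfied. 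Applying the theorem with $\n{\Sigma}_{22} = \n{\Sigma}_{-k,-k}$ and $\n{\lambda}_2 = \n{\lambda}_{-k}$ then gives $Y_k \sim \mb_1(\mu_k, \lambda_k, \sigma_{kk}; g_k)$ with
\[
g_k(u) = \int_{T(R(\n{\lambda}_{-k}))} g(u + \n{s}'\n{s})\,{\rm d}\n{s}, \qquad T(\n{x}) = \n{\Sigma}_{-k,-k}^{-1/2}\n{x}.
\]

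Next I would rewrite this univariate conclusion in the Box--Cox symmetric notation. By the discussion following (\ref{boxcoxelipmulti}), the case $p = 1$ of the Box--Cox elliptical family coincides with the Box--Cox symmetric family, the $1 \times 1$ dispersion $\sigma_{kk}$ playing the role of $\sigma^2$; therefore $\mb_1(\mu_k, \lambda_k, \sigma_{kk}; g_k) = \bcs(\mu_k, \sqrt{\sigma_{kk}}, \lambda_k; g_k)$, which is the distributional claim of the corollary.

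It then remains to show that the integration region $T(R(\n{\lambda}_{-k})) = \n{\Sigma}_{-k,-k}^{-1/2} R(\n{\lambda}_{-k})$ coincides with the rectangle $R(\n{\Sigma}_{-k,-k}^{1/2}\n{\lambda}_{-k})$ appearing in the statement. Since $\n{\Sigma}_{-k,-k}^{-1/2} = \diag\{\sigma_{jj}^{-1/2} : j \neq k\}$ is a positive diagonal matrix, the map $T$ acts coordinatewise by the sign-preserving scaling $x_j \mapsto \sigma_{jj}^{-1/2} x_j$, so $T(R(\n{\lambda}_{-k}))$ is again a rectangle whose $j$-th edge is the image of $I(\lambda_j)$ under multiplication by $\sigma_{jj}^{-1/2}$. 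A short case analysis on the sign of $\lambda_j$, using definition (\ref{conjuntos}) and the positivity of $\sigma_{jj}^{-1/2}$, shows that this image equals $I(\sqrt{\sigma_{jj}}\,\lambda_j)$: for instance, when $\lambda_j > 0$ the interval $(-1/\lambda_j, \infty)$ scales to $(-1/(\sqrt{\sigma_{jj}}\,\lambda_j), \infty) = I(\sqrt{\sigma_{jj}}\,\lambda_j)$, and the cases $\lambda_j < 0$ and $\lambda_j = 0$ are handled identically. Taking the Cartesian product over $j \neq k$ gives $T(R(\n{\lambda}_{-k})) = R(\n{\Sigma}_{-k,-k}^{1/2}\n{\lambda}_{-k})$, which is the claimed form of $g_k$.

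I expect the main obstacle to be precisely this last region identity: one must check that the positive diagonal scaling commutes with the sign-dependent definition (\ref{conjuntos}) of the edges of $R(\cdot)$, so that the scaled rectangle is exactly the rectangle indexed by the scaled vector $\n{\Sigma}_{-k,-k}^{1/2}\n{\lambda}_{-k}$. Everything else is a direct reading of Theorem \ref{bc-marg} in the scalar-marginal case together with the identification of the univariate Box--Cox elliptical and Box--Cox symmetric families.
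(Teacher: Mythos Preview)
Your proposal is correct and follows exactly the paper's approach: specialize Theorem~\ref{bc-marg} with $\n{Y}_1=Y_k$, $\n{Y}_2=\n{Y}_{-k}$, $\n{\Sigma}_{11}=\sigma_{kk}$, $\n{\Sigma}_{22}=\n{\Sigma}_{-k,-k}$, and use the diagonality of $\n{\Sigma}$ to get $\n{\Sigma}_{12}=\n{0}$. You go further than the paper's one-line proof by explicitly verifying the region identity $\n{\Sigma}_{-k,-k}^{-1/2}R(\n{\lambda}_{-k})=R(\n{\Sigma}_{-k,-k}^{1/2}\n{\lambda}_{-k})$ and the identification $\mb_1(\mu_k,\lambda_k,\sigma_{kk};g_k)=\bcs(\mu_k,\sqrt{\sigma_{kk}},\lambda_k;g_k)$, both of which the paper leaves implicit.
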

\begin{proof}
Simply let $\n{Y}_1=Y_k$, $\n{Y}_2=\n{Y}_{-k}$, $\n{\mu}_1=\mu_k$, $\n{\mu}_2=\n{\mu}_{-k}$, 
$\n{\lambda}_1=\lambda_k$, $\n{\lambda}_2=\n{\lambda}_{-k}$, 
$\n{\Sigma}_{11}=\sigma_{kk}$ \,and \,$\n{\Sigma}_{22}=\n{\Sigma}_{-k,-k}$, $k=1,\ldots,p$, in Theorem \ref{bc-marg}.
\end{proof}

In Theorem \ref{bc-marg} we stated that if $\n{Y}=(\n{Y}_1',\n{Y}_2')'\sim\mb_p(\n{\mu},\n{\lambda},\n{\Sigma};g)$,
then the sub-vector $\n{Y}_1$ has a Box--Cox elliptical distribution if $\n{\Sigma}_{12}=\n{0}$.  Note that
$\n{Y}_1$ has a distribution in the Box--Cox elliptical class but not necessarily with the same parent distribution
as $\n{Y}$ (\textit{e.g.} normal, $t$, power exponential). The condition in Theorem \ref{bc-marg}, although sufficient, is not 
necessary for the subclass of the log-elliptical distributions. Indeed, if 
$\n{Y}=(\n{Y}_1',\n{Y}_2')'\sim\lel_p(\n{\mu},\n{\Sigma};g)$, then the sub-vector $\n{Y}_1$ has a 
log--elliptical distribution for any $\n{\Sigma}(p\times p)>0$ (Fang \textit{et al}. \cite[Sec. 2.8]{FANG1}). Moreover, the 
distribution of $\n{Y}_1$ is log-elliptical with the same parent distribution as $\n{Y}$ if the DGF $g$ is that of multivariate 
scale mixture of normal distributions, as we establish in Theorem \ref{logell-marg}.

\begin{theorem}\label{logell-marg}
Let $\n{Y}\in\mathbb{R}_+^p$, $\n{\mu}\in\mathbb{R}_+^p$, $\n{\Sigma}(p\times p)>0$ partitioned as in (\ref{particion-bc}) and such 
that $\n{Y}\sim\lel_p(\n{\mu},\n{\Sigma};g)$, with $g(u)\propto\int_{0}^{\infty}t^{p/2}\exp(-ut/2)\,{\rm d}H(t)$, $u\geq0$, $H$ 
being a CDF on $(0,\infty)$. Then, $\n{Y}_1\sim\lel_r(\n{\mu}_1,\n{\Sigma}_{11};g)$.
\end{theorem}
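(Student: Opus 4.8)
The plan is to reduce the statement to a marginalization property of the underlying elliptical vector and then exploit the scale--mixture--of--normals structure of $g$. Since $\n{\lambda}=\n{0}$ forces $R(\n{0})=\mathbb{R}^p$, Definition \ref{def-bcelip} (together with the remark that $\te_p$ over $\mathbb{R}^p$ reduces to $\e_p$) tells us that $\n{Y}\sim\lel_p(\n{\mu},\n{\Sigma};g)$ is equivalent to $\n{X}:=T_{\n{0},\n{\mu}}(\n{Y})\sim\e_p(\n{0},\n{\Sigma};g)$, where $X_k=\log(Y_k/\mu_k)$. Because $\n{X}_1=(X_1,\ldots,X_r)'=T_{\n{0},\n{\mu}_1}(\n{Y}_1)$ is simply the first $r$ coordinates of $\n{X}$, the conclusion $\n{Y}_1\sim\lel_r(\n{\mu}_1,\n{\Sigma}_{11};g)$ is equivalent to showing that the $r$-dimensional marginal of $\n{X}$ satisfies $\n{X}_1\sim\e_r(\n{0},\n{\Sigma}_{11};g)$. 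So everything rests on marginalizing $\n{X}$ while preserving the form of the DGF.

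To do this I would introduce the hierarchical representation afforded by the scale--mixture form of $g$. Let $V$ be a positive random variable with CDF $H$ and set $\n{X}\mid V=t\sim N_p(\n{0},t^{-1}\n{\Sigma})$. Integrating the conditional Gaussian density against $\mathrm{d}H(t)$ and collecting the factor $\det(t^{-1}\n{\Sigma})^{-1/2}=t^{p/2}\det(\n{\Sigma})^{-1/2}$ reproduces exactly $c_p\det(\n{\Sigma})^{-1/2}g(\n{x}'\n{\Sigma}^{-1}\n{x})$ with $g(u)\propto\int_0^\infty t^{p/2}\exp(-ut/2)\,\mathrm{d}H(t)$, so this mixture legitimately represents $\n{X}\sim\e_p(\n{0},\n{\Sigma};g)$.

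The core step is then a conditional marginalization. Given $V=t$, the standard marginalization of the multivariate normal gives $\n{X}_1\mid V=t\sim N_r(\n{0},t^{-1}\n{\Sigma}_{11})$, since the covariance of the first $r$ coordinates of $N_p(\n{0},t^{-1}\n{\Sigma})$ is the $(1,1)$ block $t^{-1}\n{\Sigma}_{11}$. Integrating this conditional density against $\mathrm{d}H(t)$, and again collecting $\det(t^{-1}\n{\Sigma}_{11})^{-1/2}=t^{r/2}\det(\n{\Sigma}_{11})^{-1/2}$, yields a density proportional to $\det(\n{\Sigma}_{11})^{-1/2}\int_0^\infty t^{r/2}\exp(-t\,\n{x}_1'\n{\Sigma}_{11}^{-1}\n{x}_1/2)\,\mathrm{d}H(t)$. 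This is precisely the PDF of $\e_r(\n{0},\n{\Sigma}_{11};g)$ in the scale--mixture family indexed by $H$: the marginal is again a scale mixture of normal distributions driven by the \emph{same} mixing CDF $H$, with the exponent of $t$ passing automatically from $p/2$ to the dimension-appropriate $r/2$. Undoing the identification $\n{X}_1=T_{\n{0},\n{\mu}_1}(\n{Y}_1)$ then delivers $\n{Y}_1\sim\lel_r(\n{\mu}_1,\n{\Sigma}_{11};g)$.

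The delicate point is exactly the claim that the marginal carries ``the same $g$''. The DGF of $\n{X}_1$ is literally $\int_0^\infty t^{r/2}\exp(-ut/2)\,\mathrm{d}H(t)$ rather than $\int_0^\infty t^{p/2}\exp(-ut/2)\,\mathrm{d}H(t)$, so the substance of the theorem is that both describe the same member of the scale--mixture--of--normals family, namely the one indexed by $H$, with the power of $t$ dictated by the ambient dimension and the discrepancy absorbed into the normalizing constant $c_r$. Establishing this closure under marginalization, with $H$ preserved, is the main obstacle, and the conditioning argument above makes it transparent. If one preferred a direct density computation instead, the same conclusion would follow by writing the marginal as $\int_{\mathbb{R}^{p-r}}f_{\n{X}}(\n{x})\,\mathrm{d}\n{x}_2$, using Fubini to interchange the $\n{x}_2$-integral with the mixing integral over $t$, and evaluating the resulting Gaussian integral over $\n{x}_2$; there the only nonroutine verification is the integrability needed to justify the interchange.
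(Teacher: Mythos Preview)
Your argument is correct and follows the same overall structure as the paper: pass from $\n{Y}$ to the elliptical vector $\n{X}=T_{\n{0},\n{\mu}}(\n{Y})\sim\e_p(\n{0},\n{\Sigma};g)$, show that the marginal $\n{X}_1\sim\e_r(\n{0},\n{\Sigma}_{11};g)$ with the \emph{same} scale--mixture DGF, and then transform back. The difference lies only in how the middle step is justified. The paper does not argue it out; it simply invokes Kano's consistency result for scale mixtures of normals (Kano, 1994), which states precisely that marginals of an $\e_p$ distribution with DGF of the form $g(u)\propto\int_0^\infty t^{p/2}\exp(-ut/2)\,\mathrm{d}H(t)$ are again elliptical with the same mixing CDF $H$. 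You instead supply a direct proof via the hierarchical representation $\n{X}\mid V=t\sim N_p(\n{0},t^{-1}\n{\Sigma})$, marginalize the conditional Gaussian, and re-mix. Your route is more self-contained and makes transparent exactly why the power of $t$ shifts from $p/2$ to $r/2$ while $H$ is preserved; the paper's route is shorter but relies on an external reference. Your closing remark about the ``delicate point'' is well taken and is in fact the content of Kano's theorem.
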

\begin{proof}
See Appendix \ref{proof-logell-marg}.
\end{proof}

The following log-elliptical distributions have DGF as multivariate scale mixture of normal distributions 
and therefore satisfy the conditions in Theorem \ref{logell-marg}: multivariate log--normal distribution 
($H$ is the CDF of a degenerate distribution at $t=1$), the multivariate log--$t$ distribution 
($H$ is the CDF of a gamma distribution with shape parameter $\tau/2$ and scale parameter $2/\tau$, $\tau>0$), 
the multivariate log--slash distribution ($H$ is the CDF of $T=U^{2/q}$, $q>0$, with $U\sim{\mbox{U}}(0,1)$), 
and the multivariate log--power exponential distribution for $0<\beta\leq1$ ($H$ is the CDF with PDF 
$h(t)=\frac{1}{2}t^{-3/2}h_{\beta}(t^{-1/2})$, $0<\beta<1$, with $h_{\beta}$ given in 
Gómez-Sánchez-Manzano \textit{et al.} \cite[Eq. 3]{GOM2}. If $\beta=1$, $H$ is as in the multivariate log--normal distribution
case). However, Theorem \ref{logell-marg} does not apply to the multivariate log--power exponential distribution for $\beta>1$.
%The multivariate log--power exponential distribution does not satisfy the conditions in the Theorem \ref{logell-marg}, 
%since the marginal distribution of any sub-vector of a random vector with power exponential distribution necessarily does 
%not have the same parent distribution (Gómez et al. \cite[Sec. 5]{GOM1}).

In Theorem \ref{bc-cond} we state that, if $\n{Y}=(\n{Y}_1',\n{Y}_2')'$ has a Box--Cox elliptical distribution, then
the conditional distribution of $\n{Y}_1|\n{Y}_2$ is Box--Cox elliptical.

\begin{theorem}\label{bc-cond}
Let $\n{Y}\in\mathbb{R}_+^p$, $\n{\mu}\in\mathbb{R}_+^p$, $\n{\lambda}\in\mathbb{R}^p$, $\n{\Sigma}(p\times p)>0$ 
partitioned as in (\ref{particion-bc}) and such that $\n{Y}\sim\mb_p(\n{\mu},\n{\lambda},\n{\Sigma};g)$. Let
$\n{\mu}_1(\n{w}_2)=\n{\Sigma}_{12}\n{\Sigma}_{22}^{-1}\n{w}_2\in R(\n{\lambda}_1)$ \,and\, $\n{\alpha}(\n{w}_2)=\n{1}+\n{D}_{\n{\lambda}_1}\n{\mu}_1(\n{w}_2)$, 
with $\n{w}_2=T_{\n{\lambda}_2,\n{\mu}_2}(\n{y}_2)$, then 
$\n{Y}_1|\n{Y}_2=\n{y}_2\sim\mb_r(\n{\delta}_1,\n{\lambda}_1,\n{D}_{\n{\alpha}(\n{w}_2)}^{-1}\n{\Sigma}_{11\cdot2}\n{D}_{\n{\alpha}(\n{w}_2)}^{-1};g_{q(\n{w}_2)})$, 
where $\n{\delta}_1 = T_{\n{\lambda}_1,\n{\mu}_1}^{-1}(\n{\mu}_1(\n{w}_2))$, 
$\n{\Sigma}_{11\cdot2} = \n{\Sigma}_{11}-\n{\Sigma}_{12}\n{\Sigma}_{22}^{-1}\n{\Sigma}_{21}$ \,e\, $g_{q(\n{w}_2)}(u) = g(u+q(\n{w}_2))$, $u\geq0$, with $q(\n{w}_2) = \n{w}_2'\n{\Sigma}_{22}^{-1}\n{w}_2$.
\end{theorem}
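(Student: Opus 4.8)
The plan is to reduce everything to the transformed vector and then reuse the two structural results already at hand: the conditional law of a rectangularly truncated elliptical vector (Theorem~\ref{tr-eltrun-condic}, in its block form) and the change-of-location characterization in Theorem~\ref{carac-bc-elip}. First I would invoke Definition~\ref{def-bcelip} to write $\n{W}=T_{\n{\lambda},\n{\mu}}(\n{Y})\sim\te_p(\n{0},\n{\Sigma};R(\n{\lambda});g)$, partitioned as $\n{W}=(\n{W}_1',\n{W}_2')'$ with $\n{W}_1=T_{\n{\lambda}_1,\n{\mu}_1}(\n{Y}_1)$ and $\n{W}_2=T_{\n{\lambda}_2,\n{\mu}_2}(\n{Y}_2)$, exploiting the product structure $R(\n{\lambda})=R(\n{\lambda}_1)\times R(\n{\lambda}_2)$ recorded after the partition~(\ref{particion-bc}). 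Working on the $\n{W}$-scale lets me postpone the Box--Cox Jacobian entirely, since it will be absorbed by Theorem~\ref{carac-bc-elip} in the last step.

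The core step is to identify the conditional distribution of $\n{W}_1$ given $\n{W}_2=\n{w}_2$. I would work directly from the PDF~(\ref{eliptrun}) with $\n{\mu}=\n{0}$, dividing the joint density by the marginal of $\n{W}_2$. The essential algebraic fact is the partitioned-quadratic-form identity
\begin{equation*}
\n{w}'\n{\Sigma}^{-1}\n{w}=(\n{w}_1-\n{\mu}_1(\n{w}_2))'\n{\Sigma}_{11\cdot2}^{-1}(\n{w}_1-\n{\mu}_1(\n{w}_2))+\n{w}_2'\n{\Sigma}_{22}^{-1}\n{w}_2,
\end{equation*}
with $\n{\mu}_1(\n{w}_2)=\n{\Sigma}_{12}\n{\Sigma}_{22}^{-1}\n{w}_2$ and $\n{\Sigma}_{11\cdot2}=\n{\Sigma}_{11}-\n{\Sigma}_{12}\n{\Sigma}_{22}^{-1}\n{\Sigma}_{21}$. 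Substituting this into $g$ turns the joint integrand into $g_{q(\n{w}_2)}\bigl((\n{w}_1-\n{\mu}_1(\n{w}_2))'\n{\Sigma}_{11\cdot2}^{-1}(\n{w}_1-\n{\mu}_1(\n{w}_2))\bigr)$, the additive constant $q(\n{w}_2)=\n{w}_2'\n{\Sigma}_{22}^{-1}\n{w}_2$ being absorbed into the shifted DGF exactly as in Theorem~\ref{tr-eltrun-condic}. Since the support factorizes, fixing $\n{w}_2\in R(\n{\lambda}_2)$ leaves $\n{w}_1$ ranging over $R(\n{\lambda}_1)$, so after normalizing over $R(\n{\lambda}_1)$ the conditional density is, by Definition~\ref{defeliptrunc}, exactly that of $\te_r(\n{\mu}_1(\n{w}_2),\n{\Sigma}_{11\cdot2};R(\n{\lambda}_1);g_{q(\n{w}_2)})$.

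Finally I would translate back to the original scale. Since $\n{W}_1=T_{\n{\lambda}_1,\n{\mu}_1}(\n{Y}_1)$ and, conditionally on $\n{Y}_2=\n{y}_2$, $\n{W}_1$ is truncated elliptical on $R(\n{\lambda}_1)$ with the \emph{nonzero} location $\n{\mu}_1(\n{w}_2)$, this is precisely the situation covered by Theorem~\ref{carac-bc-elip}. Applying that theorem in dimension $r$ with $\n{\xi}=\n{\mu}_1(\n{w}_2)$, $\n{\alpha}=\n{\alpha}(\n{w}_2)=\n{1}+\n{D}_{\n{\lambda}_1}\n{\mu}_1(\n{w}_2)$, dispersion $\n{\Sigma}_{11\cdot2}$ and DGF $g_{q(\n{w}_2)}$ yields $\n{Y}_1\mid\n{Y}_2=\n{y}_2\sim\mb_r(\n{\delta}_1,\n{\lambda}_1,\n{D}_{\n{\alpha}(\n{w}_2)}^{-1}\n{\Sigma}_{11\cdot2}\n{D}_{\n{\alpha}(\n{w}_2)}^{-1};g_{q(\n{w}_2)})$ with $\n{\delta}_1=T_{\n{\lambda}_1,\n{\mu}_1}^{-1}(\n{\mu}_1(\n{w}_2))$, which is the claimed result.

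The main obstacle I anticipate is the second step: Theorem~\ref{tr-eltrun-condic} is stated only for a single coordinate $W_k\mid\n{W}_{-k}$, so the block conditional law must be re-derived here. The partitioned-quadratic-form identity makes this routine, but care is needed to verify the hypothesis $\n{\mu}_1(\n{w}_2)\in R(\n{\lambda}_1)$ flagged in the statement, so that $\n{\delta}_1=T_{\n{\lambda}_1,\n{\mu}_1}^{-1}(\n{\mu}_1(\n{w}_2))$ is well-defined and Theorem~\ref{carac-bc-elip} is applicable; this is exactly what guarantees the shifted location is an admissible argument of the inverse transformation.
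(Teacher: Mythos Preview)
Your proposal is correct and follows essentially the same path as the paper: both hinge on the partitioned-quadratic-form identity $\n{w}'\n{\Sigma}^{-1}\n{w}=(\n{w}_1-\n{\mu}_1(\n{w}_2))'\n{\Sigma}_{11\cdot2}^{-1}(\n{w}_1-\n{\mu}_1(\n{w}_2))+q(\n{w}_2)$ and the transformation property $\n{D}_{\n{\alpha}(\n{w}_2)}^{-1}(\n{w}_1-\n{\mu}_1(\n{w}_2))=T_{\n{\lambda}_1,\n{\delta}_1}(\n{y}_1)$. The only difference is organizational: the paper carries out the computation directly on the conditional PDF of $\n{Y}_1\mid\n{Y}_2$ and recognizes the Box--Cox elliptical form by hand (including the Jacobian identity $\prod_{k=1}^r\mu_k^{\lambda_k}(1+\lambda_k\mu_{1k}(\n{w}_2))=\prod_{k=1}^r\delta_k^{\lambda_k}$), whereas you first pass to the $\n{W}$-scale to obtain a truncated elliptical conditional and then delegate the last step to Theorem~\ref{carac-bc-elip}, which is a slightly cleaner packaging of the same algebra.
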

\begin{proof}
See Appendix \ref{proof-bc-cond}. 
\end{proof}

If $\n{\Sigma}_{12}=\n{0}$ in Theorem \ref{bc-cond}, then
$\n{Y}_1|\n{Y}_2=\n{y}_2\sim\mb_r\bigl(\n{\mu}_1,\n{\lambda}_1,\n{\Sigma}_{11};g_{q(\n{w}_2)}\bigr)$. By comparing
this conditional distribution with the marginal distribution of $\n{Y}_1$ given in Theorem \ref{bc-marg}, we have that, if
$\n{\Sigma}_{12}=\n{0}$, $\n{Y}_1|\n{Y}_2$ \,and\, $\n{Y}_1$ have the same distribution if the DGFs
$g_{q(\n{w}_2)}$ and $g_1$ coincide. %têm a mesma estrutura, exceto constantes??. 
This characterizes the independence of the sub-vectors
$\n{Y}_1$ and $\n{Y}_2$, as we state in Theorem \ref{carac-bc-elip-norm}.

\begin{theorem}\label{carac-bc-elip-norm}
Let $\n{Y}\in\mathbb{R}_+^p$, $\n{\mu}\in\mathbb{R}_+^p$, $\n{\lambda}\in\mathbb{R}^p$, $\n{\Sigma}(p\times p)>0$ 
partitioned as in (\ref{particion-bc}) and such that $\n{Y}\sim\mb_p(\n{\mu},\n{\lambda},\n{\Sigma};g)$. Then, 
$\n{Y}_1$ and $\n{Y}_2$ are independent if and only if 
$\n{Y}\sim\mbn_p(\n{\mu},\n{\lambda},\n{\Sigma})$ \,and \, $\n{\Sigma}_{12}=\n{0}$.
\end{theorem}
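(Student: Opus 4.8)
The plan is to compare the marginal distribution of $\n{Y}_1$ (Theorem \ref{bc-marg}) with the conditional distribution of $\n{Y}_1$ given $\n{Y}_2$ (Theorem \ref{bc-cond}), exploiting the elementary fact that $\n{Y}_1$ and $\n{Y}_2$ are independent if and only if $\n{Y}_1\mid\n{Y}_2=\n{y}_2$ has the same distribution as $\n{Y}_1$ for (almost) every $\n{y}_2$. This turns the problem into a matching of location parameters, dispersion matrices and DGFs of two Box--Cox elliptical distributions, and ultimately into a functional equation for $g$.

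For the ``if'' part I would assume $\n{Y}\sim\mbn_p(\n{\mu},\n{\lambda},\n{\Sigma})$ and $\n{\Sigma}_{12}=\n{0}$. Specializing Theorem \ref{bc-cond} to $\n{\Sigma}_{12}=\n{0}$ gives $\n{\mu}_1(\n{w}_2)=\n{0}$, $\n{\alpha}(\n{w}_2)=\n{1}$, $\n{\delta}_1=\n{\mu}_1$ and $\n{\Sigma}_{11\cdot2}=\n{\Sigma}_{11}$, so the conditional distribution is $\mb_r(\n{\mu}_1,\n{\lambda}_1,\n{\Sigma}_{11};g_{q(\n{w}_2)})$ with $g_{q(\n{w}_2)}(u)=g(u+q(\n{w}_2))$. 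Since the normal DGF satisfies $g(u)\propto e^{-u/2}$, we get $g_{q(\n{w}_2)}(u)=e^{-q(\n{w}_2)/2}g(u)\propto g(u)$, hence the conditional law does not depend on $\n{y}_2$; this is exactly independence. Equivalently, one can observe directly that block-diagonality of $\n{\Sigma}^{-1}$ together with the exponential form of $g$ factorizes both $g(\n{w}'\n{\Sigma}^{-1}\n{w})$ and the product support $R(\n{\lambda})=R(\n{\lambda}_1)\times R(\n{\lambda}_2)$ in the PDF (\ref{boxcoxelipmulti}).

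For the ``only if'' part, assume $\n{Y}_1$ and $\n{Y}_2$ are independent. I would first force $\n{\Sigma}_{12}=\n{0}$: by Theorem \ref{bc-cond} the location parameter of $\n{Y}_1\mid\n{Y}_2=\n{y}_2$ is $\n{\delta}_1=T_{\n{\lambda}_1,\n{\mu}_1}^{-1}(\n{\mu}_1(\n{w}_2))$ with $\n{\mu}_1(\n{w}_2)=\n{\Sigma}_{12}\n{\Sigma}_{22}^{-1}\n{w}_2$ and $\n{w}_2=T_{\n{\lambda}_2,\n{\mu}_2}(\n{y}_2)$. Independence forces $\n{\delta}_1$ to be constant in $\n{y}_2$; since $T_{\n{\lambda}_1,\n{\mu}_1}^{-1}$ is injective and $\n{w}_2$ ranges over the open rectangle $R(\n{\lambda}_2)$ (which contains a neighbourhood of $\n{0}$), the linear map $\n{w}_2\mapsto\n{\Sigma}_{12}\n{\Sigma}_{22}^{-1}\n{w}_2$ can be constant only if $\n{\Sigma}_{12}=\n{0}$. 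With $\n{\Sigma}_{12}=\n{0}$ in hand, Theorem \ref{bc-marg} gives $\n{Y}_1\sim\mb_r(\n{\mu}_1,\n{\lambda}_1,\n{\Sigma}_{11};g_1)$ and Theorem \ref{bc-cond} gives $\n{Y}_1\mid\n{Y}_2=\n{y}_2\sim\mb_r(\n{\mu}_1,\n{\lambda}_1,\n{\Sigma}_{11};g_{q(\n{w}_2)})$; as the two laws share $(\n{\mu}_1,\n{\lambda}_1,\n{\Sigma}_{11})$ and live on the same support, equality for every $\n{y}_2$ holds if and only if their DGFs agree up to a multiplicative constant, that is $g(u+q(\n{w}_2))\propto g_1(u)$ for all $u\geq0$ (the quadratic form $\n{w}_1'\n{\Sigma}_{11}^{-1}\n{w}_1$ sweeps out all of $[0,\infty)$). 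Evaluating this proportionality at $\n{w}_2=\n{0}$, where $q(\n{0})=0$, yields $g_1\propto g$, and then evaluating at $u=0$ fixes the constant, giving $g(u+q)\,g(0)=g(u)\,g(q)$ for all $u,q\geq0$.

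The final step is to solve this Cauchy-type multiplicative equation: for a positive, measurable DGF $g$ its only solutions are $g(u)\propto e^{ku}$, and the integrability requirement $\int_0^\infty r^{p-1}g(r^2)\,\mathrm{d}r<\infty$ forces $k<0$, i.e.\ the normal DGF $g(u)\propto e^{-u/2}$ after the usual normalization. Combined with $\n{\Sigma}_{12}=\n{0}$, this gives $\n{Y}\sim\mbn_p(\n{\mu},\n{\lambda},\n{\Sigma})$, completing the characterization. I expect the main obstacle to lie in the last two steps of the ``only if'' direction: carefully reducing the coincidence of the DGFs $g_{q(\n{w}_2)}$ and $g_1$ to the functional equation (tracking the $\n{w}_2$-dependent proportionality constant and verifying that both $u$ and $q$ range over full intervals), and then justifying the exponential form of $g$ from only positivity and measurability rather than differentiability.
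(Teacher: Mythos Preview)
Your proposal is correct and reaches the same Cauchy-type functional equation for $g$ as the paper, but the route differs. The paper does not pass through Theorems \ref{bc-marg} and \ref{bc-cond} at all; it argues directly on the joint PDF (\ref{boxcoxelipmulti}). Since the Jacobian factor $\prod_k y_k^{\lambda_k-1}/\mu_k^{\lambda_k}$ and the product support $R(\n{\lambda})=R(\n{\lambda}_1)\times R(\n{\lambda}_2)$ already split, independence is equivalent to $g(\n{w}'\n{\Sigma}^{-1}\n{w})$ factoring as a product of a function of $\n{w}_1$ and a function of $\n{w}_2$; the paper then cites a standard elliptical-distribution result (Gupta \textit{et al}.) to conclude in one stroke that this happens iff $\n{\Sigma}_{12}=\n{0}$ and $g(u+v)=g(u)g(v)$, whose only admissible solution is $g(u)\propto e^{-u/2}$. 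Your approach is more self-contained, leaning only on the paper's own Theorems \ref{bc-marg} and \ref{bc-cond}, and it makes the mechanism forcing $\n{\Sigma}_{12}=\n{0}$ explicit rather than outsourcing it to a reference; the price is a longer argument.

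One step in your argument deserves tightening, and it is not among the two you flagged. Your deduction of $\n{\Sigma}_{12}=\n{0}$ rests on the claim that if the conditional law $\mb_r(\n{\delta}_1,\n{\lambda}_1,\n{D}_{\n{\alpha}(\n{w}_2)}^{-1}\n{\Sigma}_{11\cdot2}\n{D}_{\n{\alpha}(\n{w}_2)}^{-1};g_{q(\n{w}_2)})$ is constant in $\n{w}_2$, then in particular its location parameter $\n{\delta}_1$ must be constant. But here the dispersion matrix and the DGF are \emph{also} varying with $\n{w}_2$, so you are implicitly using that the scale parameter of a Box--Cox elliptical law is identifiable even when the other parameters move. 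This is true (e.g.\ pass to $\n{W}_1=T_{\n{\lambda}_1,\n{\mu}_1}(\n{Y}_1)$ and note that the ellipsoidal level sets of a truncated elliptical density determine their centre), but it needs a sentence of justification; the paper's direct-factorization argument sidesteps this entirely. Your worries about the last two steps are well placed but minor: once $\n{\Sigma}_{12}=\n{0}$, both $u=\n{w}_1'\n{\Sigma}_{11}^{-1}\n{w}_1$ and $q(\n{w}_2)=\n{w}_2'\n{\Sigma}_{22}^{-1}\n{w}_2$ range over all of $[0,\infty)$ (each $I(\lambda_k)$ contains $0$ and is unbounded), the proportionality constant is pinned down by evaluating at $\n{w}_2=\n{0}$, and the integrability condition in Definition \ref{elip} supplies enough regularity to force the exponential solution.
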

\begin{proof}
See Appendix \ref{proof-carac-bc-elip-norm}. 
\end{proof}

In Theorem \ref{bc-momentos} we give an expression for mixed moments of Box--Cox elliptical random vectors.

\begin{theorem}\label{bc-momentos}
Let $\n{h}\in\mathbb{R}^p$, $\n{\mu}\in\mathbb{R}_+^p$, $\n{\lambda}\in\mathbb{R}^p$, $\n{\Sigma}(p\times p)>0$,
$\n{Y}\sim\mb_p(\n{\mu},\n{\lambda},\n{\Sigma};g)$ \,and\, $\n{U}\sim\mb_p(\n{1},\n{\lambda},\n{\Sigma};g)$. If 
${\mbox{E}}(\prod_{k=1}^{p}U_k^{h_k}) <\infty$, then
\begin{equation*}
{\mbox{E}}\biggl(\prod_{k=1}^{p}Y_k^{h_k}\biggr)=\biggl(\prod_{k=1}^{p}\mu_k^{h_k}\biggl){\mbox{E}}\biggl(\prod_{k=1}^{p}U_k^{h_k}\biggl).
\end{equation*}
\end{theorem}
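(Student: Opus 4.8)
The plan is to reduce everything to item~$1$ of Theorem~\ref{prop-mbce}, the scale-equivariance property of the Box--Cox elliptical class. The key observation is that $\n{U}\sim\mb_p(\n{1},\n{\lambda},\n{\Sigma};g)$ differs from $\n{Y}\sim\mb_p(\n{\mu},\n{\lambda},\n{\Sigma};g)$ only through a positive diagonal rescaling of the location (scale) vector, and such a rescaling can be realized by applying $\n{D}_{\n{\alpha}}$ to $\n{Y}$ for an appropriate $\n{\alpha}\in\mathbb{R}_{+}^p$.

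Concretely, I would set $\n{\alpha}=(1/\mu_1,\ldots,1/\mu_p)'$, which lies in $\mathbb{R}_{+}^p$ because $\n{\mu}\in\mathbb{R}_{+}^p$. By item~$1$ of Theorem~\ref{prop-mbce},
\begin{equation*}
\n{D}_{\n{\alpha}}\n{Y}\sim\mb_p(\n{D}_{\n{\alpha}}\n{\mu},\n{\lambda},\n{\Sigma};g),
\end{equation*}
and since $\n{D}_{\n{\alpha}}\n{\mu}=(\alpha_1\mu_1,\ldots,\alpha_p\mu_p)'=\n{1}$, the random vector $\n{D}_{\n{\alpha}}\n{Y}=(Y_1/\mu_1,\ldots,Y_p/\mu_p)'$ has exactly the same distribution as $\n{U}$. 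Because the distribution of a random vector determines the expectation of any integrable measurable function of it, and $\prod_{k=1}^{p}x_k^{h_k}$ is measurable on $\mathbb{R}_{+}^p$, the hypothesis ${\mbox{E}}(\prod_{k=1}^{p}U_k^{h_k})<\infty$ transfers to $\n{D}_{\n{\alpha}}\n{Y}$, giving
\begin{equation*}
{\mbox{E}}\biggl(\prod_{k=1}^{p}(Y_k/\mu_k)^{h_k}\biggr)={\mbox{E}}\biggl(\prod_{k=1}^{p}U_k^{h_k}\biggr)<\infty.
\end{equation*}

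Finally, I would pull the deterministic positive constants $\mu_k^{-h_k}$ out of the expectation on the left-hand side, using $\prod_{k=1}^{p}(Y_k/\mu_k)^{h_k}=\bigl(\prod_{k=1}^{p}\mu_k^{-h_k}\bigr)\prod_{k=1}^{p}Y_k^{h_k}$, and rearrange to obtain the claimed identity; this last step also shows ${\mbox{E}}(\prod_{k=1}^{p}Y_k^{h_k})<\infty$, so both sides are well defined. I do not anticipate a genuine obstacle here: the only substantive point is recognizing that the scaling property of Theorem~\ref{prop-mbce} converts $\n{Y}$ into a $\mb_p(\n{1},\cdot,\cdot;g)$ vector, after which the argument is equality in distribution followed by linearity of expectation. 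The finiteness assumption is used purely to guarantee the expectations exist before manipulating them.
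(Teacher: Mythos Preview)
Your argument is correct. It is essentially the same idea as the paper's proof but packaged at a higher level of abstraction. The paper works directly with the density \eqref{boxcoxelipmulti}: it writes
\begin{equation*}
{\mbox{E}}\biggl(\prod_{k=1}^{p}Y_k^{h_k}\biggr) = \dfrac{\int_{\mathbb{R}_{+}^p}
g(\n{w}'\n{\Sigma}^{-1}\n{w})
\prod_{k=1}^{p}\frac{y_k^{\lambda_k+h_k-1}}{\mu_k^{\lambda_k}}\,{\rm d}\n{y}}{\int_{R(\n{\lambda})}g(\n{w}'\n{\Sigma}^{-1}\n{w})\,{\rm d}\n{w}},
\end{equation*}
and then performs the change of variables $\n{u}=\n{D}_{\n{\mu}}^{-1}\n{y}$ in the numerator to recognize the expectation against the $\mb_p(\n{1},\n{\lambda},\n{\Sigma};g)$ density. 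You instead invoke item~1 of Theorem~\ref{prop-mbce} to assert $\n{D}_{\n{\mu}}^{-1}\n{Y}\stackrel{d}{=}\n{U}$ and then use equality in distribution plus linearity of expectation. The substitution $\n{u}=\n{D}_{\n{\mu}}^{-1}\n{y}$ in the paper is precisely the computation underlying that item of Theorem~\ref{prop-mbce}, so the two proofs are the same change of variables viewed at the integral level versus the distributional level. Your route is slightly more economical because it reuses a result already established, and it makes the finiteness transfer explicit; the paper's route is self-contained at the cost of redoing the substitution.
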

\begin{proof}
See Appendix \ref{proof-bc-momentos}. 
\end{proof}

The computation of mixed moments of $\n{Y}\sim\mb_p(\n{\mu},\n{\lambda},\n{\Sigma};g)$ from mixed moments of
$\n{U}\sim\mb_p(\n{1},\n{\lambda},\n{\Sigma};g)$ as indicated in Theorem \ref{bc-momentos} is possible using Monte Carlo integration. 
By using Algorithm \ref{algo-elip-trun}, one may generate a random sample of size $n$ of the random vector
$\n{W}=T_{\n{\lambda},\n{1}}(\n{U})\sim{\mbox{TE}\ell}_p(\n{0},\n{\Sigma};R(\n{\lambda});g)$, say
$\n{w}_1,\ldots,\n{w}_n$, where $\n{w}_i=(w_{i1},\ldots,w_{ip})$, $i=1,\ldots,n$. If $n$ is large,
\begin{equation*}
{\mbox{E}}\biggl(\prod_{k=1}^{p}Y_k^{h_k}\biggr) \approx \frac{1}{n}\sum_{i=1}^{n}\prod_{k=1}^p \bigl(\mu_k u_k(w_{ik})\bigr)^{h_k},
\end{equation*}
with $u_k(w_{ik})=T_{\lambda_k,1}^{-1}(w_{ik})$, $i=1,\ldots,n$; $k=1,\ldots,p$. 

Let $\n{\lambda}=\n{0}$ (\textit{i.e.}, $\n{Y}\sim\lel_p(\n{\mu},\n{\Sigma};g)$) in Theorem \ref{bc-momentos}. We have that
\begin{equation*}
{\mbox{E}}\biggl(\prod_{k=1}^{p}Y_k^{h_k}\biggr)=\biggl(\prod_{k=1}^{p}\mu_k^{h_k}\biggl){\mbox{M}}_{\n{X}}(\n{h}),
\end{equation*}
whenever ${\mbox{M}}_{\n{X}}$, the moment generating function of $\n{X}\sim\e_p(\n{0},\n{\Sigma};g)$, exists 
(see Fang \textit{et al}. \cite[Sec. 2.8]{FANG1}).
Another consequence of Theorem \ref{bc-momentos} is that  the covariance matrices of
$\n{Y}\sim\mb_p(\n{\mu},\n{\lambda},\n{\Sigma};g)$ \,and\, $\n{U}\sim\mb_p(\n{1},\n{\lambda},\n{\Sigma};g)$, denoted by 
$\n{\Sigma}_{\n{Y}}$ and $\n{\Sigma}_{\n{U}}$, respectively, are such that $\n{\Sigma_Y}=\n{D_\mu}\n{\Sigma}_{\n{U}}\n{D_\mu}$.
Moreover, the correlation matrices of $\n{Y}$ and $\n{U}$ are equal. 
%i.e., $\n{\rho_Y}=\n{\rho}_{\n{U}}$.

\section{Parameter interpretation}
\label{S:4}

%Nesta seção apresentamos a interpretação dos parâmetros que determinam as distribuições Box--Cox elípticas. Também mostramos 
%as relações que alguns desses parâmetros têm com quantis das distribuições marginais.

From Definition \ref{def-bcelip} we have that the distribution of a random vector 
$\n{Y}\sim\mb_p(\n{\mu},\n{\lambda},\n{\Sigma};g)$ is characterized by a random vector
$\n{W}=T_{\n{\lambda},\n{\mu}}(\n{Y})\sim\te_p(\n{0},\n{\Sigma};R(\n{\lambda});g)$. In such a characterization, the parameter vectors 
$\n{\mu}\in\mathbb{R}_{+}^p$ \,and \,$\n{\lambda}\in\mathbb{R}^p$ are introduced through an extended Box--Cox transformation
(Definition \ref{def-bc-transf-esten}), in such a way that $\mu_k$ \,and \,$\lambda_k$, 
$k=1,\ldots,p$, are parameters involved in the transformation of $Y_k$ only; hence these parameters are
characteristics of the distribution of $Y_k$. Also, the marginal distributions of the components of $\n{W}$ are associated through 
$\n{\Sigma}(p\times p)>0$, which implies that the marginal distributions of the components of $\n{Y}$ are associated through this
matrix aswell. Hence, $\mu_k$ and $\lambda_k$, $k=1,\ldots,p$, are, respectively the scale parameter and skewness parameter 
(power transformation for marginal symmetry) of the distribution of $Y_k$; $\sigma_{jk}$, $j\neq k$, is the association parameter 
between $Y_j$ and $Y_k$.

The parameters $\mu_k$ and $\sigma_{kk}$, $k=1,\ldots,p$, are related with quantiles of $Y_k$. In order to establish these relations, 
let the marginal PDF of $Y_k$ be written as
\begin{equation}\label{margy_k}
f_{Y_k}(y_k) = \frac{g_{{\n{\Upsilon}_k}}(s_k)\frac{y_k^{\lambda_k-1}}{\sqrt{\sigma_{kk}}\mu_k^{\lambda_k}}}{\int_{I(\lambda_k\sqrt{\sigma_{kk}})}g_{{\n{\Upsilon}_k}}(s_k)\,{\rm d}s_k},\quad s_k=\sigma_{kk}^{-1/2}T_{\lambda_k,\mu_k}(y_k),\quad y_k>0, 
\end{equation}
with $I(\lambda_k\sqrt{\sigma_{kk}})$ defined in (\ref{conjuntos}) \,and
\begin{equation}\label{fdp_marg_aux}
g_{{\n{\Upsilon}_k}}(u_k) = \int_{R(\n{\Delta}_{-k,-k}\n{\lambda}_{-k})}g((1+\n{\Upsilon}_k\n{\Upsilon}_k')u_k^2 - 
2\n{\Upsilon}_k\n{\Omega}_ku_k\n{w} + \n{w}'\n{\Omega}_k'\n{\Omega}_k\n{w})\,{\rm{d}}\n{w},
\end{equation}
where $u_k\in I(\lambda_k\sqrt{\sigma_{kk}})$, $\n{\Delta}=\diag\{\sqrt{\sigma_{11}},\ldots,\sqrt{\sigma_{pp}}\}$, 
$\n{\Omega}_k=(\n{\Sigma}_{-k,-k} - \sigma_{kk}^{-1}\n{\Sigma}_{-k,k}\n{\Sigma}_{k,-k})^{-1/2}\n{\Delta}_{-k,-k}$ and
$\n{\Upsilon}_k=\sigma_{kk}^{-1/2}\n{\Sigma}_{k,-k}\n{\Omega}_k\n{\Delta}_{-k,-k}^{-1}$. 

PDF (\ref{margy_k}) can be built from a random variable $U_k$ defined in $\mathbb{R}$ with CDF
\begin{equation}\label{vble-aux-quan-fda}
F_{U_k}(u_k) = c_k\int_{-\infty}^{u_k} g_{{\n{\Upsilon}_k}}(t)\,{\rm d}t,
\end{equation}
where $c_k^{-1} = \int_{-\infty}^{\infty}g_{{\n{\Upsilon}_k}}(t)\,{\rm d}t$ (see details in Appendix \ref{proof-margy_k}). An interesting 
case occurs when the integral that involves $g_{{\n{\Upsilon}_k}}$ has integration region 
$R(\n{\Delta}_{-k,-k}\n{\lambda}_{-k})=\mathbb{R}^{p-1}$, \textit{i.e.}, when $\n{\Delta}_{-k,-k}\n{\lambda}_{-k}=\n{0}$. In this 
case, $U_k\sim\e_1(0,1;\tilde{g})$, with
\begin{equation}\label{aux-interp-bcel}
\tilde{g}(u) = \int_{\mathbb{R}^{p-1}}g(u+\n{w}'\n{w})\,{\rm{d}}\n{w},\quad u\geq0.
\end{equation}

In Theorem \ref{teo-quantis} we show that all the quantiles of the univariate marginal distributions of  
Box--Cox elliptical random vectors are proportional to the respective component of $\n{\mu}$.

\begin{theorem}\label{teo-quantis}
Let $\n{\mu}\in\mathbb{R}_+^p$, $\n{\lambda}\in\mathbb{R}^p$, $\n{\Sigma}(p\times p)>0$ and 
$\n{Y}\sim\mb_p(\n{\mu},\n{\lambda},\n{\Sigma};g)$. The $\alpha$-quantile $y_{k,\alpha}$ of 
$Y_k$, $\alpha\in(0,1)$, $k=1,\ldots,p$, satisfies
\begin{equation}\label{quantismbce}
y_{k,\alpha} =
 \begin{cases}
  \mu_k (1+\lambda_k\sqrt{\sigma_{kk}} s_{k,\alpha})^{1/\lambda_k}, & \text{$\lambda_k\neq 0$},\vspace{0.2cm}\\
  \quad\,\, \mu_k\exp(\sqrt{\sigma_{kk}}s_{k,\alpha}), & \text{$\lambda_k=0$},
 \end{cases}
\end{equation} 
with 
\begin{equation}\label{quantis1}
s_{k,\alpha} =
 \begin{cases}
  F_{U_k}^{-1}(\alpha + (1-\alpha)F_{U_k}(-1/{\lambda_k\sqrt{\sigma_{kk}}})), & \text{$\lambda_k>0$},\vspace{0.2cm}\\
  \,F_{U_k}^{-1}((1+\alpha)F_{U_k}(-1/{\lambda_k\sqrt{\sigma_{kk}}}) - 1), & \text{$\lambda_k<0$},\vspace{0.2cm}\\
  \quad\quad\quad\quad\quad\quad F_{U_k}^{-1}(\alpha), & \text{$\lambda_k = 0$},
 \end{cases}
\end{equation}
where $F_{U_k}$ is the CDF given in (\ref{vble-aux-quan-fda}).
\end{theorem}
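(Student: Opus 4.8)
The plan is to reduce the marginal quantile problem for $Y_k$ to a quantile problem for a single standardized, truncated univariate variable, exploiting that each coordinate of the extended Box--Cox transformation is a strictly increasing bijection. First I would record that, for fixed $\mu_k>0$ and any $\lambda_k\in\mathbb{R}$, the map $y_k\mapsto T_{\lambda_k,\mu_k}(y_k)$ in (\ref{bc-transf-esten}) is a strictly increasing bijection from $(0,\infty)$ onto $I(\lambda_k)$, since its derivative $y_k^{\lambda_k-1}/\mu_k^{\lambda_k}$ (equal to $1/y_k$ in the log branch $\lambda_k=0$) is positive for every sign of $\lambda_k$. Hence $s_k=\sigma_{kk}^{-1/2}T_{\lambda_k,\mu_k}(y_k)$ is a strictly increasing bijection of $y_k$ onto $I(\lambda_k\sqrt{\sigma_{kk}})$, and because quantiles are preserved by strictly increasing maps, the $\alpha$-quantile of $Y_k$ and that of $S_k=\sigma_{kk}^{-1/2}T_{\lambda_k,\mu_k}(Y_k)$ satisfy $s_{k,\alpha}=\sigma_{kk}^{-1/2}T_{\lambda_k,\mu_k}(y_{k,\alpha})$. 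Inverting this single relation is exactly (\ref{quantismbce}), so the whole problem is reduced to identifying $s_{k,\alpha}$, the $\alpha$-quantile of $S_k$.

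The second step is to identify the law of $S_k$. I would apply the change of variables $y_k\mapsto s_k$ to the marginal PDF (\ref{margy_k}); its Jacobian $\mathrm{d}s_k/\mathrm{d}y_k=y_k^{\lambda_k-1}/(\sqrt{\sigma_{kk}}\,\mu_k^{\lambda_k})$ cancels precisely the like factor in the numerator of (\ref{margy_k}), leaving $S_k$ with density proportional to $g_{\n{\Upsilon}_k}$ on the support $I(\lambda_k\sqrt{\sigma_{kk}})$, normalized by $\int_{I(\lambda_k\sqrt{\sigma_{kk}})}g_{\n{\Upsilon}_k}$. Comparing with the auxiliary variable $U_k$ of (\ref{vble-aux-quan-fda}), whose density on $\mathbb{R}$ is $c_k g_{\n{\Upsilon}_k}$, this shows that $S_k$ is exactly $U_k$ truncated to $I(\lambda_k\sqrt{\sigma_{kk}})$. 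Its CDF is therefore the truncated CDF built from $F_{U_k}$ in the same way that (\ref{fda-trunc-univ}) is built from $F_Z$, now with the (not necessarily symmetric) base CDF $F_{U_k}$.

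It then remains to invert $F_{S_k}$ in each of the three regimes for $\lambda_k$, reading the truncation endpoints of $I(\lambda_k\sqrt{\sigma_{kk}})$ from (\ref{conjuntos}). For $\lambda_k>0$ the support is the half-line $(-1/(\lambda_k\sqrt{\sigma_{kk}}),\infty)$, for $\lambda_k<0$ it is $(-\infty,-1/(\lambda_k\sqrt{\sigma_{kk}}))$, and for $\lambda_k=0$ there is no truncation and $S_k=U_k$. In each case, setting the truncated CDF equal to $\alpha$ and solving for its argument yields the corresponding line of (\ref{quantis1}); substituting each $s_{k,\alpha}$ into the inverse relation of the first step then produces (\ref{quantismbce}).

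The step I expect to be the main obstacle is the identification in the second paragraph: one must check that the asymmetric integrand $g_{\n{\Upsilon}_k}$ coming from (\ref{fdp_marg_aux}) is exactly the unnormalized density of $S_k$ after the change of variables, so that $S_k$ is genuinely a truncation of the \emph{single} variable $U_k$ of (\ref{vble-aux-quan-fda}) rather than of some $y_k$-dependent object. Once this is established the remaining quantile algebra is routine, the only care being the correct endpoint and orientation of the truncation interval for each sign of $\lambda_k$.
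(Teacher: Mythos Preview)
Your proposal is correct and follows the paper's argument essentially step for step: the paper likewise passes to the standardized variable $S_k=\sigma_{kk}^{-1/2}T_{\lambda_k,\mu_k}(Y_k)$, identifies it (in Appendix \ref{proof-margy_k}) as $U_k$ truncated to $I(\lambda_k\sqrt{\sigma_{kk}})$, writes the truncated CDF of $S_k$ from $F_{U_k}$, and inverts it case by case to obtain (\ref{quantis1}) and hence (\ref{quantismbce}). The identification you flag as the main obstacle is exactly what the paper isolates in its construction of the marginal density (\ref{margy_k}), so your plan and the paper's proof coincide.
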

\begin{proof}
See Appendix \ref{proof-teo-quantis}. 
\end{proof}

In Theorem \ref{teo-quantis} we stated that, if $\n{Y}\sim\mb_p(\n{\mu},\n{\lambda},\n{\Sigma};g)$, all the
quantiles of $Y_k$, $k=1,\ldots,p$, particularly the median, are proportional to $\mu_k$. This feature of the class of 
Box--Cox elliptical distributions makes it attractive for regression modeling purposes. In Corollary \ref{teo-quantis-logel} 
we establish conditions under which the quantiles of $Y_k$ can be calculated from quantiles of standard symmetric distributions.

\begin{corollary}\label{teo-quantis-logel}
Let $\n{\mu}\in\mathbb{R}_+^p$, $\n{\lambda}\in\mathbb{R}^p$, $\n{\Sigma}(p\times p)>0$ and 
$\n{Y}\sim\mb_p(\n{\mu},\n{\lambda},\n{\Sigma};g)$.  If $\n{\lambda}=\n{0}$ (\textit{i.e.} $\n{Y}\sim\lel_p(\n{\mu},\n{\Sigma};g)$) 
or $\lambda_j\sqrt{\sigma_{jj}}\to0$, $j=1,\ldots,p$, then the $\alpha$-quantile $y_{k,\alpha}$ of $Y_k$, 
$\alpha\in(0,1)$, $k=1,\ldots,p$, is given by $y_{k,\alpha}=\mu_k\exp(\sqrt{\sigma_{kk}}q_{\alpha})$, where 
$q_{\alpha}$ is the $\alpha$-quantile of a standard symmetric distribution with DGF given by (\ref{aux-interp-bcel}).
\end{corollary}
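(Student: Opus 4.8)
The plan is to derive this corollary as a direct specialization of Theorem~\ref{teo-quantis}, reducing the general quantile formula~(\ref{quantismbce})--(\ref{quantis1}) to the single logarithmic expression under either of the two stated hypotheses. First I would treat the case $\n{\lambda}=\n{0}$. Here every $\lambda_k=0$, so the relevant branch of~(\ref{quantismbce}) is the second one, giving $y_{k,\alpha}=\mu_k\exp(\sqrt{\sigma_{kk}}\,s_{k,\alpha})$ with $s_{k,\alpha}=F_{U_k}^{-1}(\alpha)$ from the third branch of~(\ref{quantis1}). It then remains to identify $F_{U_k}$ with the CDF of a standard symmetric distribution. Since $\lambda_k=0$ forces $\n{\Delta}_{-k,-k}\n{\lambda}_{-k}=\n{0}$ as well (all $\lambda_j$ vanish), the integration region in~(\ref{fdp_marg_aux}) becomes $\mathbb{R}^{p-1}$, and the discussion preceding the theorem shows that in this situation $U_k\sim\e_1(0,1;\tilde g)$ with $\tilde g$ as in~(\ref{aux-interp-bcel}). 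Hence $F_{U_k}^{-1}(\alpha)=q_\alpha$, the $\alpha$-quantile of the standard symmetric distribution with DGF~(\ref{aux-interp-bcel}), completing this case.

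Next I would handle the limiting case $\lambda_j\sqrt{\sigma_{jj}}\to0$ for all $j$. The idea is that both the prefactor and the $s_{k,\alpha}$ term in~(\ref{quantismbce}) converge to their $\lambda_k=0$ counterparts. For the prefactor, the standard Box--Cox limit gives $(1+\lambda_k\sqrt{\sigma_{kk}}\,s)^{1/\lambda_k}\to\exp(\sqrt{\sigma_{kk}}\,s)$ as $\lambda_k\sqrt{\sigma_{kk}}\to0$, matching the logarithmic branch. For the $s_{k,\alpha}$ term I would examine the two nontrivial branches of~(\ref{quantis1}). As $\lambda_k\sqrt{\sigma_{kk}}\to0^+$, the argument $-1/(\lambda_k\sqrt{\sigma_{kk}})\to-\infty$, so $F_{U_k}(-1/(\lambda_k\sqrt{\sigma_{kk}}))\to0$ and the first branch collapses to $F_{U_k}^{-1}(\alpha)$; symmetrically, as $\lambda_k\sqrt{\sigma_{kk}}\to0^-$ the argument tends to $+\infty$, $F_{U_k}(\cdot)\to1$, and the second branch collapses to $F_{U_k}^{-1}((1+\alpha)-1)=F_{U_k}^{-1}(\alpha)$. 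In both one-sided limits $s_{k,\alpha}\to F_{U_k}^{-1}(\alpha)$, recovering the $\lambda_k=0$ value.

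The step requiring the most care is the identification of the limiting distribution of $U_k$ in this second case, since here the $\lambda_j$ need not be exactly zero but only satisfy $\lambda_j\sqrt{\sigma_{jj}}\to0$. The integration region $R(\n{\Delta}_{-k,-k}\n{\lambda}_{-k})$ in~(\ref{fdp_marg_aux}) is determined componentwise by the quantities $\lambda_j\sqrt{\sigma_{jj}}$, and as these tend to $0$ each interval $I(\lambda_j\sqrt{\sigma_{jj}})$ expands to all of $\mathbb{R}$ by~(\ref{conjuntos}); thus the region converges to $\mathbb{R}^{p-1}$ and the generating function $g_{\n{\Upsilon}_k}$ converges to the form producing $\tilde g$ of~(\ref{aux-interp-bcel}), so that $F_{U_k}^{-1}(\alpha)\to q_\alpha$. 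Justifying the interchange of these limits with the quantile (inverse-CDF) operation is the technical crux; I would control it by appealing to continuity of $\tilde g$ and the resulting continuity and strict monotonicity of the limiting CDF, which makes the inverse continuous at the interior point $\alpha\in(0,1)$. Combining the convergence of the prefactor, of $s_{k,\alpha}$, and of $F_{U_k}^{-1}(\alpha)$ to $q_\alpha$ then yields $y_{k,\alpha}\to\mu_k\exp(\sqrt{\sigma_{kk}}\,q_\alpha)$, establishing the corollary.
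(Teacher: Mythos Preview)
Your proposal is correct and follows essentially the same route as the paper's own proof: both specialize Theorem~\ref{teo-quantis}, use the $\lambda_k=0$ branch of~(\ref{quantismbce})--(\ref{quantis1}), and invoke the fact that $R(\n{\Delta}_{-k,-k}\n{\lambda}_{-k})$ equals (or converges to) $\mathbb{R}^{p-1}$ so that $U_k\sim\e_1(0,1;\tilde g)$. Your treatment is in fact more detailed than the paper's, which simply asserts the limit without separately tracking the prefactor, the collapse of the branches of~(\ref{quantis1}), or the continuity needed to pass the limit through the inverse CDF.
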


\begin{proof}
Let $\n{\lambda}=\n{0}$ or $\lambda_j\sqrt{\sigma_{jj}}\to0$, $j=1,\ldots,p$ in Theorem \ref{teo-quantis}. From (\ref{quantismbce}) 
and (\ref{quantis1}) it follows that $y_{k,\alpha}=\mu_k\exp(\sqrt{\sigma_{kk}}q_{\alpha})$, where 
$q_{\alpha}=F_{U}^{-1}(\alpha)$, with $U\sim\e_1(0,1;\tilde{g})$, with $\tilde{g}$ being a DGF given by (\ref{aux-interp-bcel}). 
This fact follows because $R(\n{\Delta}_{-k,-k}\n{\lambda}_{-k})=\mathbb{R}^{p-1}$ when 
$\n{\lambda}=\n{0}$, or $R(\n{\Delta}_{-k,-k}\n{\lambda}_{-k}) \to \mathbb{R}^{p-1}$ when $\lambda_j\sqrt{\sigma_{jj}}\to0$, 
$j=1,\ldots,p$.
\end{proof}

Let $\n{Y}\sim\mb_p(\n{\mu},\n{\lambda},\n{\Sigma};g)$. A coefficient of variation based on quantiles for $Y_k$, $k=1,\ldots,p$, is defined as
\begin{equation*}
{\mbox{CV}_{Y_k}}=\frac{3}{4}\frac{(y_{k,3/4} - y_{k,1/4})}{y_{k,1/2}} 
\end{equation*}
(Rigby and Stasinopoulos \cite{RIG-STAS3}) .
Corollary \ref{teo-quantis-logel} allows interpretation of the parameters $\mu_k$ and $\sigma_{kk}$ from their 
relations with quantiles of $Y_k$, $k=1,\ldots,p$. In fact, if $\n{\lambda}=\n{0}$ 
(\textit{i.e.} $\n{Y}\sim\lel(\n{\mu},\n{\Sigma};g)$), then 
$\mu_k=y_{k,1/2}$ \,and\, ${\mbox{CV}_{Y_k}}=1.5\sinh(\sqrt{\sigma_{kk}}q_{3/4})$, where $q_{3/4}$ is the third 
quartile of a standard symmetric distribution with DGF given in (\ref{aux-interp-bcel}). Also, if 
$\n{\lambda}\approx\n{0}$ or $\lambda_j\sqrt{\sigma_{jj}}\approx0$, $j=1,\ldots,p$, then 
$\mu_k\approx y_{k,1/2}$ \,and\, ${\mbox{CV}_{Y_k}}\approx1.5\sinh(\sqrt{\sigma_{kk}}q_{3/4})$. Hence, in 
these cases,  $\mu_k$ is equal or approximately equal to the median of $Y_k$. Moreover, ${\mbox{CV}_{Y_k}}$ depends 
on $\sigma_{kk}$ through the hyperbolic sine function, which is a monotonically increasing function. Therefore, 
$\sigma_{kk}$ can be seen as a relative dispersion parameter of the distribution of $Y_k$.

\section{Parameter estimation}
\label{S:5}

Let $\n{y}_1,\ldots,\n{y}_n$ be the observed values of a random sample $\n{Y}_1,\ldots,\n{Y}_n$ of a random vector
$\n{Y}\sim\mb_p(\n{\mu},\n{\lambda},\n{\Sigma};g)$, with $\n{Y}_i=(Y_{i1},\ldots,Y_{ip})'$, $i=1,\ldots,n$. Let 
$\n{\eta}=(\eta_1,\ldots,\eta_q)'$ be the vector of extra parameters induced by the DGF $g$. The maximum likelihood
estimators of $\n{\mu}$, $\n{\lambda}$, $\n{\Sigma}$ \,and \,$\n{\eta}$, denoted by 
$\n{\widehat{\mu}}$, \,$\n{\widehat{\lambda}}$, $\n{\widehat{\Sigma}}$ \,and \,$\n{\widehat{\eta}}$, respectively, 
will be such that maximize the log-likelihood function $\ell=\sum_{i=1}^{n}\ell_i$, with 
\begin{equation}\label{loglik}
\ell_i = -\log\biggl\{\int_{R(\n{\lambda})}g(\n{w}'\n{\Sigma}^{-1}\n{w})\,{\rm d}\n{w}\biggr\} 
+\log\{g(\n{w}_i'\n{\Sigma}^{-1}\n{w}_i)\} + \sum_{k=1}^{p}(\lambda_k-1)\log y_{ik} - 
\sum_{k=1}^{p}\lambda_k\log\mu_k, 
\end{equation}
where $\n{w}_i=T_{\n{\lambda},\n{\mu}}(\n{y}_i)$. There is no closed form for the maximum likelihood estimators 
$\widehat{\n{\mu}}$, \,$\widehat{\n{\lambda}}$, $\widehat{\n{\Sigma}}$ \,and \,$\widehat{\n{\eta}}$, but they
can be computed using numerical optimization algorithms implemented in computer packages. The number of parameters to be 
estimated is $p(p+5)/2+q$.

Let $\n{\mu}^{(0)}$, $\n{\lambda}^{(0)}$, $\n{\Sigma}^{(0)}$ \,and \,$\n{\eta}^{(0)}$ be the initial values for 
the estimation of $\n{\mu}$, $\n{\lambda}$, $\n{\Sigma}$ \,and \,$\n{\eta}$, respectively. For the choice of 
$\mu_k^{(0)}$, $\lambda_k^{(0)}$ \,and \,$\sigma_{kk}^{(0)}$, $k=1,\ldots,p$, we suggest the estimates obtained by fitting
a Box--Cox symmetric distribution to the $k$-th component of $\n{Y}$, \textit{i.e.} the estimated parameters of 
$Y_k\sim\bcs(\mu_k,\sqrt{\sigma_{kk}},\lambda_k;g)$. As initial values for $\sigma_{jk}$, we suggest 
$\sigma_{jk}^{(0)}=0$, $j\neq k$. Initial values for the extra parameters (if any), $\eta_j^{(0)}$, $j=1,\ldots,q$, 
will depend on the family of distributions considered. For instance, for the multivariate Box--Cox $t$ distribution
we propose as initial value for the degrees of freedom parameter, $\tau^{(0)}$, the corresponding estimate obtained by fitting
a multivariate $t$ distribution to the vector $\n{X}=T_{\n{\lambda}^{(0)},\n{\mu}^{(0)}}(\n{Y})$. 

The main difficulty in implementing an optimization scheme is due to the need of an efficient computation of the 
integral $\int_{R(\n{\lambda})}g(\n{w}'\n{\Sigma}^{-1}\n{w})\,{\rm d}\n{w}$, that appears in (\ref{loglik}). 
This integral depends on the complexity and structure of the DGF $g$ and is computed 
over $R(\n{\lambda})$. Hence, the vector of the extra parameters $\n{\eta}$, the matrix $\n{\Sigma}$ and the vector 
$\n{\lambda}$ are involved in the estimation procedure through this integral. Genz and Bretz \cite{GENZ1} propose 
algorithms to efficiently compute this type of integral over rectangles when $g$ is the DGF of the multivariate normal and
$t$ families. In the class of the log-elliptical distributions ($\n{\lambda}=\n{0}$) the integral 
%$\int_{R(\n{\lambda})}g(\n{w}'\n{\Sigma}^{-1}\n{w})\,{\rm d}\n{w}$ 
disappears making the estimation process much easier. 
In this case, the logarithm of the likelihood function is given by $\ell=\sum_{i=1}^{n}\ell_i$, where 
$\ell_i$, $i=1,\ldots,n$, is
\begin{equation*}
\ell_i = -\frac{1}{2}\log(\det(\n{\Sigma}))
+\log\{g(\n{w}_i'\n{\Sigma}^{-1}\n{w}_i)\} - \sum_{k=1}^{p}\log y_{ik},
\end{equation*}
with $\n{w}_i=T_{\n{0},\n{\mu}}(\n{y}_i)$. Here, the unknown quantities to be estimated are
$\n{\mu}$, $\n{\Sigma}$ \,and \,$\n{\eta}$, \textit{i.e.} $p(p+3)/2+q$ parameters.

To evaluate the proposed estimation procedure we conducted simulations with bivariate log-normal, log-$t$, 
Box--Cox normal and Box--Cox $t$ distributions, different sample sizes, namely $n=125, 250, 500$, and $N=5000$ 
Monte Carlo replicates. The random samples of $\n{Y}\sim\mb_p(\n{\mu},\n{\lambda},\n{\Sigma};g)$ were generated 
using Algorithm \ref{algo-bc-elip}.

\begin{algorithm}\label{algo-bc-elip}
\begin{enumerate}
\item[\phantom{alejo}] 
\item Generate a random sample of size $n$, say $\n{w}_1,\ldots,\n{w}_n$, of $\n{W}\sim\te_p(\n{0},\n{\Sigma};R(\n{\lambda});g)$
using Algorithm \ref{algo-elip-trun}.
\item Compute $\n{y}_1=T_{\n{\lambda},\n{\mu}}^{-1}(\n{w}_1),\ldots,\n{y}_n=T_{\n{\lambda},\n{\mu}}^{-1}(\n{w}_n)$.
From Definition \ref{def-bcelip}, $\n{y}_1,\ldots,\n{y}_n$ is a random sample of $\n{Y}\sim\mb_p(\n{\mu},\n{\lambda},\n{\Sigma};g)$.
\end{enumerate}
\end{algorithm}

In each simulation experiment we used the Broyden, Fletcher, Goldfarb, and Shanno (BFGS) optimization algorithm to maximize the 
log-likelihood function with the initial values proposed above. The integral in (\ref{loglik}) was efficiently evaluated using
algorithms proposed by Genz and Bretz \cite{GENZ1}. All the computations were conducted in the \texttt{R} software \cite{R1}.

Let $\widehat{\theta}_1,\ldots,\widehat{\theta}_N$ be the ordered estimated values of a scalar parameter, say $\theta$, 
in $N$ Monte Carlo simulated samples. Let ${\mbox{M}}(\widehat{\theta})$ be the median of 
$\{\widehat{\theta}_1,\ldots,\widehat{\theta}_N\}$. The median bias, denoted by ${\mbox{MB}}(\widehat{\theta})$, 
is given by ${\mbox{MB}}(\widehat{\theta})={\mbox{M}}(\widehat{\theta})-\theta$. The median absolute deviation,  
denoted by ${\mbox{MAD}}(\widehat{\theta})$, is defined as the median of 
$\{|\widehat{\theta}_1-{\mbox{M}}(\widehat{\theta})|,\ldots,|\widehat{\theta}_N-{\mbox{M}}(\widehat{\theta})|\}$. 
Also, let ${\mbox{IQR}}(\widehat{\theta})$ be the interquartile range of $\{\widehat{\theta}_1,\ldots,\widehat{\theta}_N\}$. 
These summaries of the estimates were computed for each simulation experiment and reported in Table \ref{estudo-simul}. 
The figures in this table suggest a suitable behavior of the estimation procedure, because the median biases are close to zero and the 
median absolute deviations and interquartile ranges get smaller as $n$ grows.

%\afterpage{
\begin{table}[H]
\captionsetup{font=scriptsize}
\scriptsize
\centering
\caption{Median bias (MB), median absolute deviation (MAD) and interquartile range (IQR) of the parameter estimators.}
\label{estudo-simul}
\begin{tabular}{clrrrrrrrrrrrrr}\cmidrule(){1-15}
      &              & \multicolumn{5}{c}{Bivariate log-normal} & \multicolumn{8}{c}{Bivariate Box--Cox $t$} \\ 
\cmidrule(lr){3-7} \cmidrule(lr){8-15}
\multirow{2}{*}{$n$} &         & \multicolumn{1}{c}{$\mu_1$} & \multicolumn{1}{c}{$\mu_2$} & \multicolumn{1}{c}{$\sigma_{11}$} & \multicolumn{1}{c}{$\sigma_{12}$} & \multicolumn{1}{c}{$\sigma_{22}$} & \multicolumn{1}{c}{$\mu_1$} & \multicolumn{1}{c}{$\mu_2$} & \multicolumn{1}{c}{$\lambda_1$} & \multicolumn{1}{c}{$\lambda_2$} & \multicolumn{1}{c}{$\sigma_{11}$} & \multicolumn{1}{c}{$\sigma_{12}$} & \multicolumn{1}{c}{$\sigma_{22}$} & \multicolumn{1}{c}{$\tau$} \\        
      &              & \multicolumn{1}{c}{$8$}   & \multicolumn{1}{c}{$8$} & \multicolumn{1}{c}{$0.8$} & \multicolumn{1}{c}{$-0.5$} & \multicolumn{1}{c}{$1$} & \multicolumn{1}{c}{$20$}  & \multicolumn{1}{c}{$15$}    & \multicolumn{1}{c}{$0.4$} & \multicolumn{1}{c}{$0.3$} & \multicolumn{1}{c}{$0.4$} & \multicolumn{1}{c}{$0.1$}  & \multicolumn{1}{c}{$0.3$} & \multicolumn{1}{c}{$6$} \\ \cmidrule(){1-15}
      & MB           & $-0.02$ & $0.00$ & $-0.02$ & $0.02$  & $-0.03$ & $0.08$ & $-0.01$ & $-0.04$ & $-0.01$ & $0.06$ & $0.06$ & $0.05$ & $2.89$  \\
$125$ & MAD          & $0.59$  & $0.66$ & $0.07$  & $0.07$  & $0.09$  & $0.72$ & $0.42$  & $0.25$  & $0.13$  & $0.31$ & $0.16$ & $0.16$ & $4.62$  \\ \vspace{0.06cm}
      & IQR          & $1.19$  & $1.33$ & $0.15$  & $0.14$  & $0.18$  & $2.96$ & $0.84$  & $0.51$  & $0.27$  & $2.80$ & $0.61$ & $0.36$ & $18.42$ \\ \\
      & MB           & $-0.01$ & $0.00$ & $-0.01$ & $0.01$  & $-0.01$ & $0.05$ & $0.00$  & $-0.02$ & $-0.01$ & $0.04$ & $0.03$ & $0.02$ & $1.48$  \\
$250$ & MAD          & $0.42$  & $0.47$ & $0.05$  & $0.05$  & $0.07$  & $0.50$ & $0.29$  & $0.19$  & $0.10$  & $0.24$ & $0.10$ & $0.11$ & $2.70$  \\ \vspace{0.06cm}
      & IQR          & $0.85$  & $0.93$ & $0.10$  & $0.10$  & $0.13$  & $1.46$ & $0.58$  & $0.38$  & $0.20$  & $0.96$ & $0.28$ & $0.24$ & $6.60$  \\ \\
      & MB           & $0.00$  & $0.00$ & $0.00$  & $0.00$  & $0.00$  & $0.07$ & $0.01$  & $-0.02$ & $-0.01$ & $0.05$ & $0.02$ & $0.02$ & $1.07$  \\ 
$500$ & MAD          & $0.30$  & $0.33$ & $0.04$  & $0.04$  & $0.05$  & $0.40$ & $0.20$  & $0.13$  & $0.07$  & $0.20$ & $0.07$ & $0.08$ & $1.67$  \\ \vspace{0.06cm}
      & IQR          & $0.60$  & $0.66$ & $0.08$  & $0.07$  & $0.09$  & $0.94$ & $0.40$  & $0.27$  & $0.15$  & $0.56$ & $0.17$ & $0.16$ & $3.74$  \\ \cmidrule(){1-15}
      &              & \multicolumn{6}{c}{Bivariate log-$t$} & \multicolumn{7}{c}{Bivariate Box--Cox normal} \\ \cmidrule(lr){3-8} \cmidrule(lr){9-15}
\multirow{2}{*}{$n$} &         & \multicolumn{1}{c}{$\mu_1$} & \multicolumn{1}{c}{$\mu_2$} & \multicolumn{1}{c}{$\sigma_{11}$} & \multicolumn{1}{c}{$\sigma_{12}$} & \multicolumn{1}{c}{$\sigma_{22}$} & \multicolumn{1}{c}{$\tau$} & \multicolumn{1}{c}{$\mu_1$} & \multicolumn{1}{c}{$\mu_2$} & \multicolumn{1}{c}{$\lambda_1$} & \multicolumn{1}{c}{$\lambda_2$} & \multicolumn{1}{c}{$\sigma_{11}$} & \multicolumn{1}{c}{$\sigma_{12}$} & \multicolumn{1}{c}{$\sigma_{22}$} \\        
      &              & \multicolumn{1}{c}{$7$}   & \multicolumn{1}{c}{$10$} & \multicolumn{1}{c}{$1.2$} & \multicolumn{1}{c}{$0.6$}  & \multicolumn{1}{c}{$1.4$} & \multicolumn{1}{c}{$5$} & \multicolumn{1}{c}{$5$}   & \multicolumn{1}{c}{$4$}     & \multicolumn{1}{c}{$-1$}    & \multicolumn{1}{c}{$0.5$} & \multicolumn{1}{c}{$0.6$} & \multicolumn{1}{c}{$0.2$}  & \multicolumn{1}{c}{$0.8$}  \\ \cmidrule(){1-15}
      & MB           & $0.01$  & $0.02$  & $0.00$ & $-0.01$ & $-0.01$ & $0.25$ & $-0.17$ & $-0.06$ & $0.01$ & $0.00$ & $0.00$ & $0.00$ & $-0.01$ \\
$125$ & MAD          & $0.97$  & $1.45$  & $0.22$ & $0.17$  & $0.26$  & $1.76$ & $0.91$  & $0.58$  & $0.08$ & $0.09$ & $0.04$ & $0.02$ & $0.03$  \\ \vspace{0.06cm}
      & IQR          & $1.31$  & $1.98$  & $0.30$ & $0.22$  & $0.35$  & $2.63$ & $1.80$  & $1.15$  & $0.15$ & $0.17$ & $0.08$ & $0.05$ & $0.05$  \\ \\
      & MB           & $-0.01$ & $-0.01$ & $0.00$ & $0.00$  & $0.00$  & $0.15$ & $-0.07$ & $-0.04$ & $0.00$ & $0.00$ & $0.00$ & $0.00$ & $0.00$  \\
$250$ & MAD          & $0.69$  & $1.05$  & $0.16$ & $0.12$  & $0.19$  & $1.18$ & $0.63$  & $0.41$  & $0.05$ & $0.06$ & $0.03$ & $0.02$ & $0.02$  \\ \vspace{0.06cm}
      & IQR          & $0.93$  & $1.43$  & $0.22$ & $0.16$  & $0.25$  & $1.66$ & $1.26$  & $0.82$  & $0.10$ & $0.12$ & $0.05$ & $0.03$ & $0.04$  \\ \\
      & MB           & $0.00$  & $0.00$  & $0.00$ & $0.00$  & $0.00$  & $0.07$ & $-0.02$ & $-0.01$ & $0.00$ & $0.00$ & $0.00$ & $0.00$ & $0.00$  \\
$500$ & MAD          & $0.46$  & $0.72$  & $0.12$ & $0.08$  & $0.13$  & $0.82$ & $0.45$  & $0.29$  & $0.04$ & $0.04$ & $0.02$ & $0.01$ & $0.01$  \\ \vspace{0.06cm}
      & IQR          & $0.62$  & $0.97$  & $0.16$ & $0.11$  & $0.18$  & $1.13$ & $0.90$  & $0.58$  & $0.07$ & $0.09$ & $0.04$ & $0.02$ & $0.03$  \\ \cmidrule(){1-15}
\end{tabular}
\end{table}
%}

\section{Application}
\label{S:6}

%We now present applications of Box--Cox elliptical distributions to two real datasets. 

The dataset refers to observations of vitamins B2 (in mg), B3 (in mg), B12 (in mcg) and D (in mcg) intakes based on the first 24-h dietary recall 
interview for $n = 136$ older men. The bagplots (Rousseeuw et al. \cite{ROUSS1}) shown in Figure \ref{bagplots-nutric} indicate that the vitamin 
intakes are positively correlated, their bivariate distributions are skewed, and that outliers are present. 

\afterpage{
\begin{figure}[H]
\captionsetup{font=scriptsize}
\centering
\includegraphics[scale=0.64,trim={0.45cm 0.7cm 0cm 0.3cm},clip]{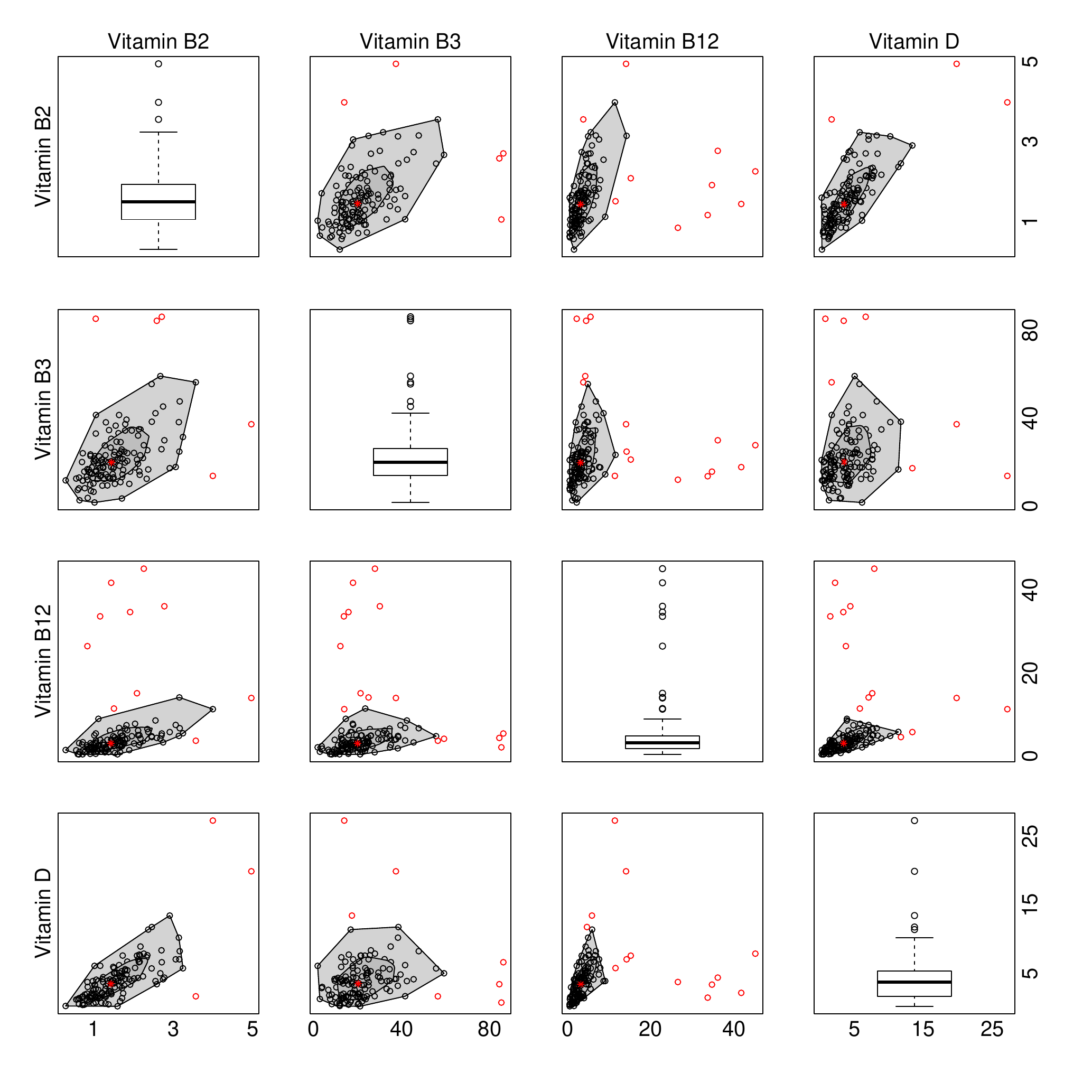}
\caption{Bagplot matrix; nutritional data.}
\label{bagplots-nutric}
\end{figure}
}
For each pair of variables, we fitted bivariate log-normal, log-$t$, Box--Cox normal and Box--Cox $t$ distributions,
and the respective marginal independent distributions; we denote these distributions by $\mln_2$, $\mlt_2$, 
$\mbn_2$, $\mbt_2$, ${\mbox{Ind-}}\mln_1$, ${\mbox{Ind-}}\mlt_1$, ${\mbox{Ind-}}\mbn_1$ and ${\mbox{Ind-}}\mbt_1$, respectively. 
Table \ref{ajustes-nutric} shows the Akaike information criterion (AIC) for each fit. The figures in this table indicate
that the bivariate distributions provide better fit when compared with the respective 
marginal independent distributions. This is not surprising since there is evidence of association among the variables.
Additionally, Table \ref{ajustes-nutric} indicates that the bivariate Box--Cox $t$ distribution gives the best fit 
for the pairs of variables: vitamins B2-D, B3-D and B12-D. Also, the bivariate log-$t$ distribution provides the best fit
for the pairs: vitamins B2-B3, B2-B12 and B3-B12. Hence, the bivariate distributions based on the $t$ distribution provide 
better fit than those based on the normal distribution. This fact is due to the presence of extreme outliers 
(Figure \ref{bagplots-nutric}). 

\afterpage{
\begin{table}[H]
\captionsetup{font=scriptsize}
\scriptsize
\centering
\caption{AIC for the fitting distributions; nutritional data.}
\label{ajustes-nutric}
\begin{tabular}{lrrrrrrrr}\cmidrule(){1-9}
\multicolumn{1}{l}{Variables} & \multicolumn{1}{c}{${\mbox{Ind-}}\mln_1$} & \multicolumn{1}{c}{$\mln_2$} & \multicolumn{1}{c}{${\mbox{Ind-}}\mlt_1$} & \multicolumn{1}{c}{$\mlt_2$} & \multicolumn{1}{c}{${\mbox{Ind-}}\mbn_1$} & \multicolumn{1}{c}{$\mbn_2$} & \multicolumn{1}{c}{${\mbox{Ind-}}\mbt_1$} & \multicolumn{1}{c}{$\mbt_2$} \\ \cmidrule(){1-9} \vspace{0.06cm}
B2-B3  & $1329.87$ & $1291.88$ & $1320.10$ & $\underline{1279.44}$ & $1328.90$ & $1289.57$ & $1323.15$ & $1281.66$             \\ \vspace{0.06cm}
B2-B12 & $946.51$  & $900.99$  & $941.11$  & $\underline{883.24}$  & $945.13$  & $895.20$  & $944.11$  & $885.68$              \\ \vspace{0.06cm}
B2-D   & $916.83$  & $814.27$  & $919.47$  & $802.72$              & $914.99$  & $803.79$  & $918.08$  & $\underline{796.84}$  \\ \vspace{0.06cm}
B3-B12 & $1726.19$ & $1710.38$ & $1707.90$ & $\underline{1689.32}$ & $1720.67$ & $1702.24$ & $1710.27$ & $1690.79$             \\ \vspace{0.06cm}
B3-D   & $1696.51$ & $1689.65$ & $1686.26$ & $1679.20$             & $1690.52$ & $1683.20$ & $1684.75$ & $\underline{1676.79}$ \\ 
B12-D  & $1313.15$ & $1250.09$ & $1307.27$ & $1226.64$             & $1306.76$ & $1233.10$ & $1305.23$ & $\underline{1223.88}$ \\ \cmidrule(){1-9}
\end{tabular}
\end{table}
}
% \setcaptionmargin{0.2cm}
% \begin{table}[H]
% \centering
% \caption{Valores do AIC das distribuições bivariadas ajustadas a duplas de variáveis nutricionais.}
% \label{ajustes-nutric}
% \begin{tabular}{lcccccccc}\hline
% \multicolumn{1}{l}{Variáveis} & \multicolumn{1}{c}{Ind. $\mbn_1$} & \multicolumn{1}{c}{Ind. $\mbt_1$} & \multicolumn{1}{c}{log-${\mbox{SN}}_2$} & \multicolumn{1}{c}{log-S$t_2$} & $\mln_2$ & $\mlt_2$ & \multicolumn{1}{c}{$\mbn_2$} & \multicolumn{1}{c}{$\mbt_2$} \\ \hline
% (vit. B2, vit. B3)   & $1328.90$ & $1323.15$  & $1290.06$ & $1281.71$     & $1291.88$ & $\n{1279.44}$ & $1289.57$ & $1281.66$ \\ 
% (vit. B2, vit. B12)  & $945.13$  & $944.11$   & $889.75$  & $\n{882.21}$  & $900.99$  & $883.24$      & $895.20$  & $885.68$  \\ 
% (vit. B2, vit. D)    & $914.99$  & $918.08$   & $793.45$  & $\n{789.20}$  & $814.27$  & $802.72$      & $803.79$  & $796.84$  \\ 
% (vit. B3, vit. B12)  & $1720.67$ & $1710.27$  & $1904.78$ & $1689.84$     & $1710.38$ & $\n{1689.32}$ & $1702.24$ & $1690.79$ \\ 
% (vit. B3, vit. D)    & $1690.52$ & $1684.75$  & $1687.10$ & $\n{1675.70}$ & $1689.65$ & $1679.20$     & $1683.20$ & $1676.79$ \\ 
% (vit. B12, vit. D)   & $1306.76$ & $1305.23$  & $1228.34$ & $\n{1218.28}$ & $1250.09$ & $1226.64$     & $1233.10$ & $1223.88$ \\ \hline
% \end{tabular}
% \end{table}

Table \ref{param-ajuste-nutric} gives the estimates (and standard errors) of the parameters of the best fitting model as indicated in Table \ref{ajustes-nutric}. It is noteworthy that the estimated degrees of freedom parameter varies from 
4 to 8, indicating that heavier-than-normal distributions are better suited for fitting the data.
\afterpage{
\begin{table}[H]
\captionsetup{font=scriptsize}
\scriptsize
\centering
\caption{Estimates of the parameters (and standard errors) of the best fitting distribution; nutritional data.}
\label{param-ajuste-nutric}
\begin{tabular}{lrrrrrrrr}\cmidrule(){1-9}
\multicolumn{1}{l}{Variables}        & \multicolumn{1}{c}{$\widehat{\mu}_1$} & \multicolumn{1}{c}{$\widehat{\mu}_2$} & \multicolumn{1}{c}{$\widehat{\lambda}_1$} & \multicolumn{1}{c}{$\widehat{\lambda}_2$} & \multicolumn{1}{c}{$\widehat{\sigma}_{11}$} & \multicolumn{1}{c}{$\widehat{\sigma}_{12}$} & \multicolumn{1}{c}{$\widehat{\sigma}_{22}$} &  \multicolumn{1}{c}{$\widehat{\tau}$} \\ \cmidrule(){1-9} \vspace{0.06cm}
B2-B3 ($\mlt_2$) & $1.45$ $(0.05)$  & $19.91$ $(0.90)$ & \multicolumn{1}{c}{--} & \multicolumn{1}{c}{--} & $0.16$ $(0.02)$ & $0.10$ $(0.02)$ & $0.23$ $(0.04)$ & $6.22$ $(2.18)$ \\ \vspace{0.06cm}
B2-B12 ($\mlt_2$)& $1.46$ $(0.06)$  & $3.10$ $(0.20)$  & \multicolumn{1}{c}{--} & \multicolumn{1}{c}{--} & $0.15$ $(0.02)$ & $0.16$ $(0.03)$ & $0.43$ $(0.08)$ & $4.57$ $(1.36)$ \\ \vspace{0.06cm}
B2-D   ($\mbt_2$)& $1.45$ $(0.06)$  & $3.42$ $(0.23)$  & $0.19$ $(0.17)$        & $0.31$ $(0.11)$        & $0.16$ $(0.03)$ & $0.22$ $(0.04)$ & $0.48$ $(0.08)$ & $7.96$ $(3.57)$ \\ \vspace{0.06cm}
B3-B12 ($\mlt_2$)& $20.10$ $(0.91)$ & $3.13$ $(0.20)$  & \multicolumn{1}{c}{--} & \multicolumn{1}{c}{--} & $0.20$ $(0.04)$ & $0.13$ $(0.03)$ & $0.42$ $(0.08)$ & $3.96$ $(1.12)$ \\ \vspace{0.06cm}
B3-D   ($\mbt_2$)& $19.86$ $(0.94)$ & $3.30$ $(0.23)$  & $0.15$ $(0.14)$        & $0.24$ $(0.12)$        & $0.24$ $(0.04)$ & $0.12$ $(0.03)$ & $0.51$ $(0.08)$ & $7.42$ $(3.03)$ \\
B12-D  ($\mbt_2$)& $3.10$ $(0.20)$  & $3.42$ $(0.24)$  & $-0.19$ $(0.10)$       & $0.15$ $(0.11)$        & $0.45$ $(0.08)$ & $0.31$ $(0.06)$ & $0.47$ $(0.08)$ & $5.50$ $(1.99)$ \\ \cmidrule(){1-9}
\end{tabular}
\end{table}
}

For the bivariate log-$t$ distribution fitted to the pair of vitamins B2-B3 the estimates of $\mu_1$ and $\mu_2$ are
$\widehat{\mu}_1=1.45$ and $\widehat{\mu}_2=19.91$ and correspond to estimates of the median intake of vitamins B2 
and B3 in the population. These estimates are close to the corresponding sample medians ($1.49$ and $19.99$, respectively). 
%Também, os coeficientes de variação baseados em percentis dos consumos de vitaminas B2 e B3 são aproximadamente iguais às quantidades 
%$1.5\sinh(\sqrt{\widehat{\sigma}_{11}}s_{3/4})$ e $1.5\sinh(\sqrt{\widehat{\sigma}_{22}}s_{3/4})$, respectivamente; 
%sendo $s_{3/4}$ o terceiro quartil da distribuição $t$ com $6.22$ graus de liberdade. 
The estimates of the relative dispersion parameters are $\widehat{\sigma}_{11}=0.16$ and $\widehat{\sigma}_{22}=0.23$; 
hence the relative dispersion of vitamin B2 is estimated to be smaller than that of vitamin B3. 
For the intake of vitamins B12-D the best fit is achieved by the bivariate Box--Cox $t$ distribution. Note that the 
estimated parameters satisfy $\widehat{\lambda}_1\sqrt{\widehat{\sigma}_{11}}=-0.13$ and 
$\widehat{\lambda}_2\sqrt{\widehat{\sigma}_{22}}=0.10$, that are close to zero. Hence, 
$\widehat{\mu}_1=3.10$ and $\widehat{\mu}_2=3.42$ are expected to be close to the sample median
of vitamins B12 and D intakes respectively, and this is in fact the case (the sample medians are $3.24$ and $3.80$, respectively). 
%Além disso, os coeficientes de variação baseados em percentis dos consumos de vitaminas B12 e D 
%são aproximadamente iguais a $1.5\sinh(\sqrt{\widehat{\sigma}_{11}}s_{3/4})$ e $1.5\sinh(\sqrt{\widehat{\sigma}_{22}}s_{3/4})$, 
%respectivamente; em que $s_{3/4}$ o terceiro quartil da distribuição $t$ com $5.5$ graus de liberdade. 
Since $\widehat{\sigma}_{11}=0.45$ and $\widehat{\sigma}_{11}=0.47$, we have that the relative dispersions of vitamins B12 and D intakes are similar.
%Em todos os ajustes considerados as estimativas do parâmetro de graus de liberdade indicam ajustes de distribuições de 
%caudas pesadas, isto é devido à presença de observações discrepantes como mostram os \textit{bagplots} da 
%Figura \ref{bagplots-nutric}. 

Figure \ref{diagrama-ajuste-nutric} shows contour plots of the fitted distributions superimposed to the 
scatter plots of the data, and the corresponding PDFs. The plots suggest a reasonable fit for all the pairs of variables.

\begin{figure}[H]
\captionsetup{font=scriptsize}
\centering
\includegraphics[scale=0.64,trim={0.45cm 0.5cm 0cm 0.3cm},clip]{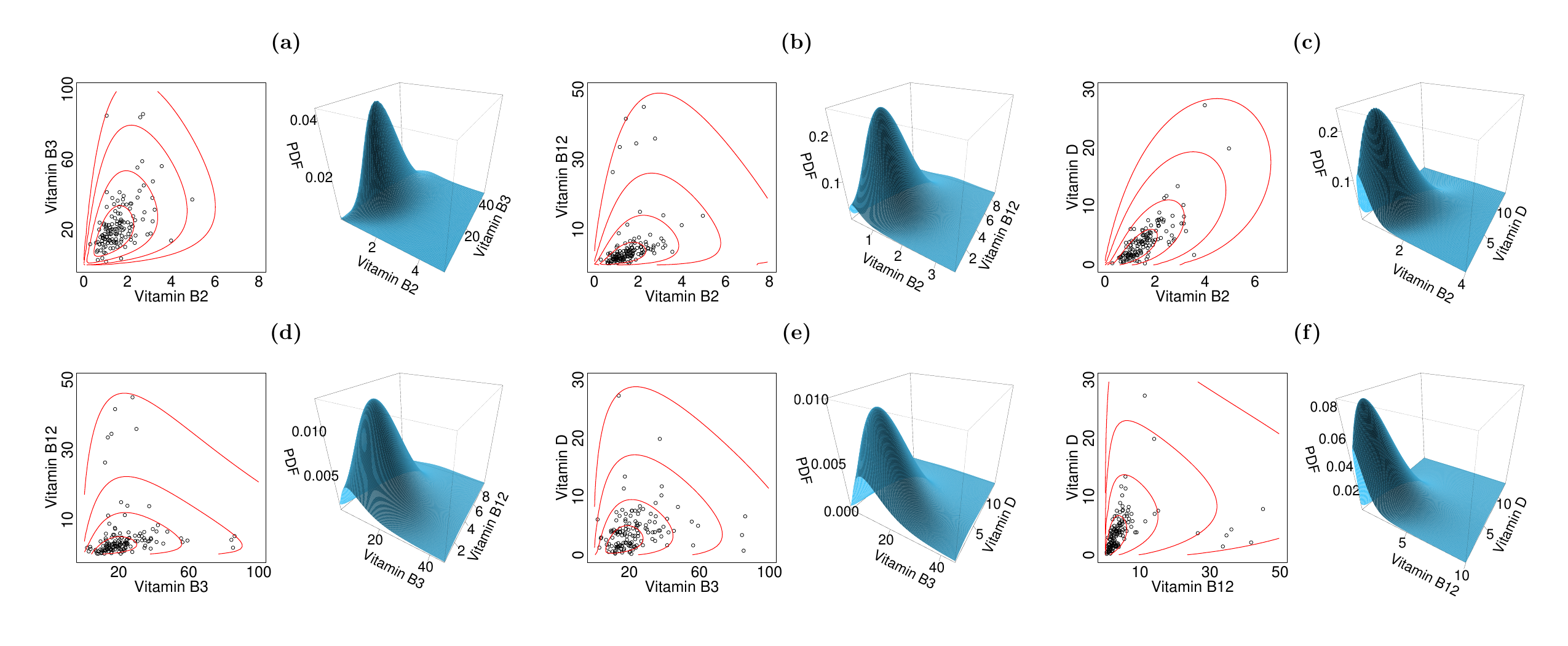}
\caption{Scatter plots overlaid with contour plots and joint PDF of the best fitting distributions; nutritional data.}
\label{diagrama-ajuste-nutric}
\end{figure}

\section{Final remarks}
\label{S:7} 

In this paper we presented a new class of multivariate distributions, the class of Box--Cox elliptical distributions, 
that is suitable for modeling multivariate positive, marginally asymmetric, possibly heavy-tailed data. 
The construction of the Box--Cox distributions uses an extended multivariate Box--Cox 
transformation and the class of truncated elliptical distributions, both defined in this paper. 
We show that the class of Box--Cox elliptical distributions has as special cases the classes of the 
log-elliptical and Box--Cox symmetric distributions. The Box--Cox elliptical distributions allow 
easy parameter interpretation, a desirable feature for modeling purposes. 
%Esta nova classe fornece uma ampla variedade de distribuições que permitem modelar dados positivos 
%multivariados, marginalmente assimétricos e com possível presença de observações discrepantes, o que exemplificamos por meio de 
%aplicações a dados nutricionais e antropométricos.

Starting from a study of the class of truncated elliptical distributions, we defined and studied the Box--Cox elliptical distributions.
Specifically, we stated useful properties and discussed maximum likelihood estimation issues, generation of random samples, 
interpretation of parameters, and applications.
%
%O estudo da classe de distribuições Box--Cox elípticas está fundamentado na classe de distribuições elípticas truncadas, que 
%também definimos neste artigo. Mostramos que a classe de distribuições elípticas é um caso particular da classe de 
%distribuições elípticas truncadas, derivamos algumas propriedades e propusemos um algoritmo para gerar amostras aleatórias, que
%foi a base para nossa proposta de geração de amostras aleatórias das distribuições Box--Cox elípticas. Um estudo detalhado 
%sobre as distribuições elípticas truncadas está sendo desenvolvido em nossas pesquisas.

There are some open problems that will be addressed in future papers. The efficiency of the implementation of maximum likelihood estimation
depends on the efficient computatioon of the integral involved in (\ref{boxcoxelipmulti}). The methods proposed by Genz and Bretz \cite{GENZ1} 
to efficiently compute the integral when $g$ is the DGF of the multivariate normal and $t$ distributions allowed us to implement maximum likelihood estimation
for the parameters of the multivariate Box--Cox normal and Box--Cox $t$ distributions. The efficient computation of the integral for other DGFs 
will provide the implementation of maximum likelihood estimation for other distributions in the Box--Cox elliptical class, such as
the multivariate Box--Cox power exponential and Box--Cox slash distributions. Also, extension to regression models is of interest.
The relation of the scale parameters to quantiles of the marginal distributions permits the construction of Box--Cox elliptical
regression models that are able to model the relationship between covariates and quantiles of the response variables. 

%Estabelecimos relações de alguns parâmetros a quantis das distribuições marginais. Em particular, colocando estrutura de regressão 
%no vetor de parâmetros de escala pode-se estabelecer a maneira em que os quantis das distribuições marginais são afetados pelas 
%covariáveis. Isto é uma extensão da abordagem que propomos neste artigo, e também está sendo estudada em nossas pesquisas.

\section*{Acknowlegments}

We thank José Eduardo Corrente for providing the data used in this study. 
Funding was provided by Conselho Nacional de Desenvolvimento Científico e Tecnológico -- CNPq (Grant No. 304388-2014-9) and 
Fundação de Amparo à Pesquisa do Estado de São Paulo -- FAPESP (Grant No. 2012/21788-2). The first author received PhD scholarships from 
Coordenação de Aperfeiçoamento de Pessoal de Nível Superior -- CAPES -- and CNPq.

\section*{Appendix}
\appendix
\renewcommand*{\thesection}{\Alph{section}}

\hypertarget{proof-tr-eltrun-condic}{\section{Proof of the Theorem \ref{tr-eltrun-condic}}\label{proof-tr-eltrun-condic}}
The conditional PDF of $W_k|\n{W}_{-k}$, $k=1,\ldots,p$, is given by
\begin{equation*}
f_{W_k|\n{W}_{-k}}(w_k) = \frac{g((\n{w}-\n{\mu})'\n{\Sigma}^{-1}(\n{w}-\n{\mu}))}{\int_{a_k}^{b_k} g((\n{w}-\n{\mu})'\n{\Sigma}^{-1}(\n{w}-\n{\mu}))\,{\rm{d}}w_k},\quad w_k\in (a_k,b_k).
\end{equation*}
From the identity $(\n{w}-\n{\mu})'\n{\Sigma}^{-1}(\n{w}-\n{\mu}) = [(w_k-\mu_{k.-k})/\sigma_{k.-k}]^2+q(\n{w}_{-k})$, 
we get the result.

\hypertarget{proof-carac-bc-elip}{\section{Proof of the Theorem \ref{carac-bc-elip}}\label{proof-carac-bc-elip}}

If $\n{W}=T_{\n{\lambda},\n{\mu}}(\n{Y})\sim{\mbox{TE}l}_p(\n{\xi},\n{\Sigma};R(\n{\lambda});g)$, then 
its PDF is given by
\begin{equation}\label{fdp-w-carac}
f_{\n{W}}(\n{w})=\dfrac{g((\n{w}-\n{\xi})'\n{\Sigma}^{-1}(\n{w}-\n{\xi}))}{\int_{R(\n{\lambda})} g((\n{w}-\n{\xi})'\n{\Sigma}^{-1}(\n{w}-\n{\xi}))\,{\rm d}\n{w}},\quad\n{w}\in R(\n{\lambda}).
\end{equation}
Let $V:R(\n{\lambda})\to R(\n{\lambda})$ be the transformation defined as 
$V(\n{w})=\n{D}_{\n{\alpha}}^{-1}(\n{w}-\n{\xi})$, and let $\n{U}=V(\n{W})$, with Jacobian $J(\n{w}\to\n{u})=\prod_{k=1}^{p}(1+\lambda_k \xi_k)$. The PDF of  $\n{U}$ is
\begin{align*}
f_{\n{U}}(\n{u}) &= \dfrac{g(\n{u}'(\n{D}_{\n{\alpha}}^{-1}\n{\Sigma}\n{D}_{\n{\alpha}}^{-1})^{-1}\n{u})}{\int_{R(\n{\lambda})} g(\n{u}'(\n{D}_{\n{\alpha}}^{-1}\n{\Sigma}\n{D}_{\n{\alpha}}^{-1})^{-1}\n{u})\,{\rm d}\n{u}},\quad\n{u}\in R(\n{\lambda}).
\end{align*}
Hence, $\n{U}\sim{\mbox{TE}l}_p(\n{0},\n{D}_{\n{\alpha}}^{-1}\n{\Sigma}\n{D}_{\n{\alpha}}^{-1};R(\n{\lambda});g)$. Because 
$\n{U}=V(T_{\n{\lambda},\n{\mu}}(\n{Y}))=T_{\n{\lambda},\n{\delta}}(\n{Y})$, where 
$\n{\delta}=T_{\n{\lambda},\n{\mu}}^{-1}(\n{\xi})$, then from Definition \ref{def-bcelip} we have 
$\n{Y}=T_{\n{\lambda},\n{\delta}}^{-1}(\n{U})\sim\mb_p(\n{\delta},\n{\lambda},\n{D}_{\n{\alpha}}^{-1}\n{\Sigma}\n{D}_{\n{\alpha}}^{-1};g)$.

On the other hand, if $\n{Y}\sim\mb_p(\n{\delta},\n{\lambda},\n{D}_{\n{\alpha}}^{-1}\n{\Sigma}\n{D}_{\n{\alpha}}^{-1};g)$, then its PDF is
\begin{equation}\label{fdp-y-carac}
f_{\n{Y}}(\n{y}) = \dfrac{g(\n{w}'(\n{D}_{\n{\alpha}}^{-1}\n{\Sigma}\n{D}_{\n{\alpha}}^{-1})^{-1}\n{w})\prod_{k=1}^{p}\frac{y_k^{\lambda_k-1}}{{\delta_k}^{\lambda_k}}}{\int_{R(\n{\lambda})}g(\n{w}'(\n{D}_{\n{\alpha}}^{-1}\n{\Sigma}\n{D}_{\n{\alpha}}^{-1})^{-1}\n{w})\,{\rm d}\n{w}},\quad\n{w}=T_{\n{\lambda},\n{\delta}}(\n{y}),\quad\n{y}\in\mathbb{R}_{+}^p.
\end{equation}
Now, from the transformation $\n{W}=T_{\n{\lambda},\n{\mu}}(\n{Y})$, with Jacobian 
$J(\n{y}\to\n{w})=\prod_{k=1}^{p}\mu_k(1+\lambda_k w_k)^{1/\lambda_k-1}$, in the PDF (\ref{fdp-y-carac}) we arrive at
PDF (\ref{fdp-w-carac}).

\hypertarget{proof-prop-mbce}{\section{Proof of Theorem \ref{prop-mbce}}\label{proof-prop-mbce}}

\begin{enumerate}
\item From $\n{T}=\n{D_\alpha Y}$, with Jacobian $J(\n{y}\to\n{t})=\prod_{k=1}^{p}a_k^{-1}$, in (\ref{boxcoxelipmulti}), 
we get the PDF of $\n{T}$ as
\begin{equation*}
f_{\n{T}}(\n{t}) = \dfrac{g(\n{w}'\n{\Sigma}^{-1}\n{w})\prod_{k=1}^{p}\frac{t_k^{\lambda_k-1}}{(\alpha_k\mu_k)^{\lambda_k}}}{\int_{R(\n{\lambda})}g(\n{w}'\n{\Sigma}^{-1}\n{w})\,{\rm d}\n{w}},\quad\n{t}\in\mathbb{R}_{+}^p,
\end{equation*}
where $\n{w}=T_{\n{\lambda},\n{\mu}}(\n{D}_{\n{\alpha}}^{-1}\n{t})=T_{\n{\lambda},\n{D_\alpha \mu}}(\n{t})$. Hence,
$\n{T}=\n{D_{\alpha}Y}\sim\mb_p(\n{D_\alpha\mu},\n{\lambda},\n{\Sigma};g)$.

\item Note that the PDF of $\n{Y}$, given in (\ref{boxcoxelipmulti}), can be expressed as
\begin{equation*}
f_{\n{Y}}(\n{y})=\dfrac{g(\n{v}'(\n{D_\beta\Sigma D_\beta})^{-1}\n{v})\prod_{k=1}^{p}\frac{|\beta_k|y_k^{\lambda_k-1}}{\mu_k^{\lambda_k}}}{\int_{R(\n{D}_{\n{\beta}}^{-1}\n{\lambda})}g(\n{v}'(\n{D_\beta\Sigma D_\beta})^{-1}\n{v})\,{\rm d}\n{v}},\quad\n{y}\in\mathbb{R}_{+}^p,
\end{equation*}
where $\n{v}=\n{D_\beta}T_{\n{\lambda},\n{\mu}}(\n{y})$ has its $k$-th component given by
\begin{equation*}\label{vetorv1}
v_k=
 \begin{cases}
  \dfrac{[(y_k/\mu_k)^{\beta_k}]^{\lambda_k/\beta_k}-1}{\lambda_k/\beta_k}, & \text{$\lambda_k\neq0$},\vspace{0.2cm}\\
  \quad\,\,\,\,\log(y_k/\mu_k)^{\beta_k}, & \text{$\lambda_k=0$},
 \end{cases} 
\end{equation*} 
for $k=1,\ldots,p$. From $U_k=(Y_k/\mu_k)^{\beta_k}$, $k=1,\ldots,p$, with Jacobian 
$J(\n{y}\to\n{u})=\prod_{k=1}^{p}\mu_k \beta_k^{-1}u_k^{1/\beta_k-1}$, we arrive at the desired result.

\item Plugging $\n{\lambda}=\n{1}$ in (\ref{boxcoxelipmulti}) we have that the PDF of $\n{Y}$ is
\begin{equation*}
f_{\n{Y}}(\n{y})=\dfrac{g((\n{y}-\n{\mu})'(\n{D_{\mu}\Sigma D_{\mu}})^{-1}(\n{y}-\n{\mu}))\prod_{k=1}^{p}\frac{1}{\mu_k}}{\int_{R(\n{1})}g(\n{w}'\n{\Sigma}^{-1}\n{w})\,{\rm d}\n{w}},\quad\n{y}\in\mathbb{R}_{+}^p.
\end{equation*}
From the change of variables  $\n{w}=\n{D}_{\n{\mu}}^{-1}(\n{y}-\n{\mu})$ we arrive at the desired result.
\end{enumerate}

\hypertarget{proof-bc-marg}{\section{Proof of Theorem \ref{bc-marg}}\label{proof-bc-marg}}

Plugging $\n{\Sigma}_{12}=\n{0}$ in (\ref{margy1}), and then making the change of variables 
$\n{s}=T(\n{w}_2)=\n{\Sigma}_{22}^{-1/2}\n{w}_2$, the marginal PDF of $\n{Y}_1$ is
\begin{align*}
f_{\n{Y}_1}(\n{y}_1)&=\dfrac{\bigr\{\int_{R(\n{\lambda_2})}g(\n{w}_1'\n{\Sigma}_{11}^{-1}\n{w}_1+\n{w}_2'\n{\Sigma}_{22}^{-1}\n{w}_2)\,{\rm d}\n{w}_2\bigl\}\prod_{k=1}^{r}\frac{y_k^{\lambda_k-1}}{\mu_k^{\lambda_k}}}{\int_{R(\n{\lambda}_1)}\bigl\{\int_{R(\n{\lambda}_2)}g(\n{w}_1'\n{\Sigma}_{11}^{-1}\n{w}_1+\n{w}_2'\n{\Sigma}_{22}^{-1}\n{w}_2)\,{\rm d}\n{w}_2\bigl\}\,{\rm d}\n{w}_1}\\
&=\dfrac{g_1(\n{w}_1'\n{\Sigma}_{11}^{-1}\n{w}_1)\prod_{k=1}^{r}\frac{y_k^{\lambda_k-1}}{\mu_k^{\lambda_k}}}{\int_{R(\n{\lambda}_1)}g_1(\n{w}_1'\n{\Sigma}_{11}^{-1}\n{w}_1)\,{\rm d}\n{w}_1},\quad \n{w}_1=T_{\n{\lambda}_1,\n{\mu}_1}(\n{y}_1),\quad\n{y}_1\in\mathbb{R}_{+}^r.
\end{align*}
Note that 
$g_{1}(u) = \int_{T(R(\n{\lambda_2}))}g(u + \n{s}'\n{s})\,{\rm d}\n{s} \leq \int_{\mathbb{R}^{p-r}}g(u+\n{s}'\n{s})\,{\rm{d}}\n{s}=h_1(u)$, 
$u\geq0$, where $h_1$ is such that $\int_{0}^{\infty}t^{r-1}h_1(t^2)\,\textrm{d}t<\infty$ 
(Fang \textit{et al}. \cite[Sec. 2.2]{FANG1}). This completes the proof.

\hypertarget{proof-logell-marg}{\section{Proof of Theorem \ref{logell-marg}}\label{proof-logell-marg}}
Because $\n{Y}\sim\lel_p(\n{\mu},\n{\Sigma};g)$, with 
$g(u)\propto\int_{0}^{\infty}t^{p/2}\exp(-ut/2)\,{\rm d}H(t)$, $u\geq0$, then 
$\n{X}=T_{\n{0},\n{\mu}}(\n{Y})\sim\e_p(\n{0},\n{\Sigma};g)$. Thus, 
$\n{X}_1=T_{\n{0},\n{\mu}_1}(\n{Y}_1)\sim\e_p(\n{0},\n{\Sigma}_{11};g)$ (Kano \cite{KANO1}). Hence, 
$\n{Y}_1\sim\lel_r(\n{\mu}_1,\n{\Sigma}_{11};g)$.

\hypertarget{proof-bc-cond}{\section{Proof of Theorem \ref{bc-cond}}\label{proof-bc-cond}}
The conditional PDF of $\n{Y}_1|\n{Y}_2$ is given by
\begin{equation}\label{cond-proof}
f_{\n{Y}_1|\n{Y}_2}(\n{y}_1)=\frac{g(\n{w}'\n{\Sigma}^{-1}\n{w})\prod_{k=1}^{r}\frac{y_k^{\lambda_k-1}}{\mu_k^{\lambda_k}}}{\int_{R(\n{\lambda}_1)}g(\n{w}'\n{\Sigma}^{-1}\n{w})\,{\rm{d}}\n{\n{w}_1}},\quad\n{w}_1=T_{\n{\lambda}_1,\n{\mu}_1}(\n{y}_1),\quad\n{y}_1\in\mathbb{R}_{+}^r.
\end{equation}
Because $\n{w}'\n{\Sigma}^{-1}\n{w}=\n{u}_1'\bigl(\n{D}_{\n{\alpha}(\n{w}_2)}^{-1}\n{\Sigma}_{11\cdot2}\n{D}_{\n{\alpha}(\n{w}_2)}^{-1}\bigr)^{-1}\n{u}_1+q(\n{w}_2)$, 
where $\n{u}_1=\n{D}_{\n{\alpha}(\n{w}_2)}^{-1}(\n{w}_1-\n{\mu}_1(\n{w}_2))=T_{\n{\lambda}_1,\n{\delta}_1}(\n{y}_1)$, (\ref{cond-proof}) can be expressed as
\begin{equation*}
f_{\n{Y}_1|\n{Y}_2}(\n{y}_1) = \frac{g_{q(\n{w}_2)}\bigl(\n{u}_1'\bigl(\n{D}_{\n{\alpha}(\n{w}_2)}^{-1}\n{\Sigma}_{11\cdot2}\n{D}_{\n{\alpha}(\n{w}_2)}^{-1}\bigr)^{-1}\n{u}_1\bigr)\prod_{k=1}^{r}\frac{y_k^{\lambda_k-1}}{\mu_k^{\lambda_k}(1+\lambda_k\mu_{1k}(\n{w}_2))}}{\int_{R(\n{\lambda}_1)}g_{q(\n{w}_2)}\bigl(\n{u}_1'\bigl(\n{D}_{\n{\alpha}(\n{w}_2)}^{-1}\n{\Sigma}_{11\cdot2}\n{D}_{\n{\alpha}(\n{w}_2)}^{-1}\bigr)^{-1}\n{u}_1\bigr)\,{\rm{d}}\n{\n{u}_1}},\quad\n{y}_1\in\mathbb{R}_{+}^p,
\end{equation*}
where $\n{u}_1=T_{\n{\lambda}_1,\n{\delta}_1}(\n{y}_1)$. Since 
$\prod_{k=1}^{r}\mu_k^{\lambda_k}(1+\lambda_k\mu_{1k}(\n{w}_2)) = \prod_{k=1}^{r}{\delta_k}^{\lambda_k}$ the proof is complete.

\hypertarget{proof-carac-bc-elip-norm}{\section{Proof of Theorem \ref{carac-bc-elip-norm}}\label{proof-carac-bc-elip-norm}}
$\n{Y}_1$ and $\n{Y}_2$ are independent if, and only if, the PDF of
$\n{Y}\sim\mb_p(\n{\mu},\n{\lambda},\n{\Sigma};g)$ given in (\ref{boxcoxelipmulti}) is such that
$f_{\n{Y}}(\n{y})=f_{\n{Y}_1}(\n{y}_1)f_{\n{Y}_2}(\n{y}_2)$.  This condition is satisfied if, and only if, 
$\n{\Sigma}_{12}=\n{0}$ and the DGF $g$ satisfies the functional equation $g(u+v)=g(u)g(v)$, with $u\geq0$ \,and
\,$v\geq0$, for which $g(u)=\exp(-ku)$, for some $k\geq0$, is a solution (Gupta \textit{et al}. \cite[Sec. 1.3]{GUP1}). From 
$\int_{0}^{\infty}t^{p-1}\exp(-kt^2)\,\textrm{d}t=2^{p/2-1}\Gamma(p/2)$, we find that $k=1/2$. Hence, $\n{Y}_1$ and $\n{Y}_2$ are 
independent if, and only if, $\n{\Sigma}_{12}=\n{0}$ and $\n{Y}\sim\mbn_p(\n{\mu},\n{\lambda},\n{\Sigma})$.

\hypertarget{proof-bc-momentos}{\section{Proof of Theorem \ref{bc-momentos}}\label{proof-bc-momentos}}
From (\ref{boxcoxelipmulti}) we have
\begin{equation*}
{\mbox{E}}\biggl(\prod_{k=1}^{p}Y_k^{h_k}\biggr) = \dfrac{\int_{\mathbb{R}_{+}^p}
g(\n{w}'\n{\Sigma}^{-1}\n{w})
\prod_{k=1}^{p}\frac{y_k^{\lambda_k+h_k-1}}{\mu_k^{\lambda_k}}\,{\rm d}\n{y}}{\int_{R(\n{\lambda})}g(\n{w}'\n{\Sigma}^{-1}\n{w})\,{\rm d}\n{w}},
\end{equation*}
where $\n{w}=T_{\n{\lambda},\n{\mu}}(\n{y})$. By making the change of variables $\n{u} =\n{D}_{\n{\mu}}^{-1}\n{y}$ we arrive at the desired result.

\hypertarget{proof-margy_k}{\section{Marginal PDF of $Y_k$}\label{proof-margy_k}}

The function $g_{{\n{\Upsilon}_k}}$ given in (\ref{fdp_marg_aux}) can be defined in $\mathbb{R}$. Hence, we can define a random 
variable $U_k\in\mathbb{R}$ from the PDF
\begin{equation*}
f_{U_k}(u_k)=c_k g_{{\n{\Upsilon}_k}}(u_k),\quad u_k\in\mathbb{R},  
\end{equation*}
where $c_k^{-1} = \int_{-\infty}^{\infty}g_{{\n{\Upsilon}_k}}(t)\,{\rm d}t$. The CDF of $U_k$ is given by 
(\ref{vble-aux-quan-fda}). We now define $S_k\in I(\lambda_k\sqrt{\sigma_{kk}})$ as a random variable $U_k$ truncated on 
$I(\lambda_k\sqrt{\sigma_{kk}})$. The PDF of $S_k$ is given by
\begin{equation}\label{fdp-vable-aux-quant}
f_{S_k}(s_k) = \frac{g_{{\n{\Upsilon}_k}}(s_k)}{\int_{I(\lambda_k\sqrt{\sigma_{kk}})}g_{{\n{\Upsilon}_k}}(s_k)\,{\rm d}s_k},\quad s_k\in I(\lambda_k\sqrt{\sigma_{kk}}).
\end{equation}
From the transformation $S_k=\sigma_{kk}^{-1/2}T_{\lambda_k,\mu_k}(Y_k)$, with Jacobian 
$J(s_k\to y_k)=\sigma_{kk}^{-1/2}\mu_{k}^{-\lambda_k}y_k^{\lambda_k-1}$, we arrive at the PDF of $Y_k$ given in (\ref{margy_k}).

\hypertarget{proof-teo-quantis}{\section{Proof of Theorem \ref{teo-quantis}}\label{proof-teo-quantis}}

Because $\n{Y}\sim\mb_p(\n{\mu},\n{\lambda},\n{\Sigma};g)$, the PDF of $Y_k$, $k=1,\ldots,p$, is given by 
(\ref{margy_k}), where $S_k=\sigma_{kk}^{-1/2}T_{\lambda_k,\mu_k}(Y_k)$ has PDF given in (\ref{fdp-vable-aux-quant}). The
CDF of $S_k$ is
\begin{equation}\label{fda_marg_construc}
F_{S_k}(s_k) =
 \begin{cases}
  \,\,\,\,\dfrac{F_{U_k}(s_k)-F_{U_k}(-1/{\lambda_k\sqrt{\sigma_{kk}})}}{1-F_{U_k}(-1/{\lambda_k\sqrt{\sigma_{kk}})}}, & \text{$\lambda_k > 0$},\vspace{0.2cm}\\
  \dfrac{1 - F_{U_k}(-1/{\lambda_k\sqrt{\sigma_{kk}})}+F_{U_k}(s_k)}{F_{U_k}(-1/{\lambda_k\sqrt{\sigma_{kk}})}}, & \text{$\lambda_k<0$},\vspace{0.2cm}\\
  \quad\quad\quad\quad\,\, F_{U_k}(s_k), & \text{$\lambda_k = 0$},\vspace{0.2cm}
 \end{cases}
\end{equation}
from which we have that the $\alpha$-quantil $y_{k,\alpha}$ of $Y_k$, $\alpha\in(0,1)$, is such that ${\mbox{P}}(Y_k\leq y_{k,\alpha})=\alpha$, or equivalently ${\mbox{P}}[S_k \leq \sigma_{kk}^{-1/2}T_{\lambda_k,\mu_k}(y_{k,\alpha})]=\alpha$. Hence,
\begin{equation*}
y_{k,\alpha} =
 \begin{cases}
  \mu_k (1+\lambda_k\sqrt{\sigma_{kk}} s_{k,\alpha})^{1/\lambda_k}, & \text{$\lambda_k\neq 0$},\vspace{0.2cm}\\
  \quad\,\,\mu_k\exp(\sqrt{\sigma_{kk}}s_{k,\alpha}), & \text{$\lambda_k=0$},
 \end{cases} 
\end{equation*}
where $s_{k,\alpha}$ is such that $F_{S_k}(s_{k,\alpha})=\alpha$, with $F_{S_k}$ given in (\ref{fda_marg_construc}). Therefore, 
$s_{k,\alpha}$ is given by
\begin{equation*}
s_{k,\alpha} =
 \begin{cases}
  F_{U_k}^{-1}(\alpha + (1-\alpha)F_{U_k}(-1/{\lambda_k\sqrt{\sigma_{kk}}})), & \text{$\lambda_k>0$},\vspace{0.2cm}\\
  F_{U_k}^{-1}((1+\alpha)F_{U_k}(-1/{\lambda_k\sqrt{\sigma_{kk}}}) - 1), & \text{$\lambda_k<0$},\vspace{0.2cm}\\
  \quad\quad\quad\quad\quad\quad F_{U_k}^{-1}(\alpha), & \text{$\lambda_k = 0$},
 \end{cases}
\end{equation*}
where $F_{U_k}$ is the CDF given in (\ref{vble-aux-quan-fda}).

%\bibliographystyle{model1-num-names}
%\bibliography{referencias}

% \section*{References}

\end{document}